\newcommand{\orcid}[1]{\href{https://orcid.org/#1}{{\includegraphics[scale=0.05]{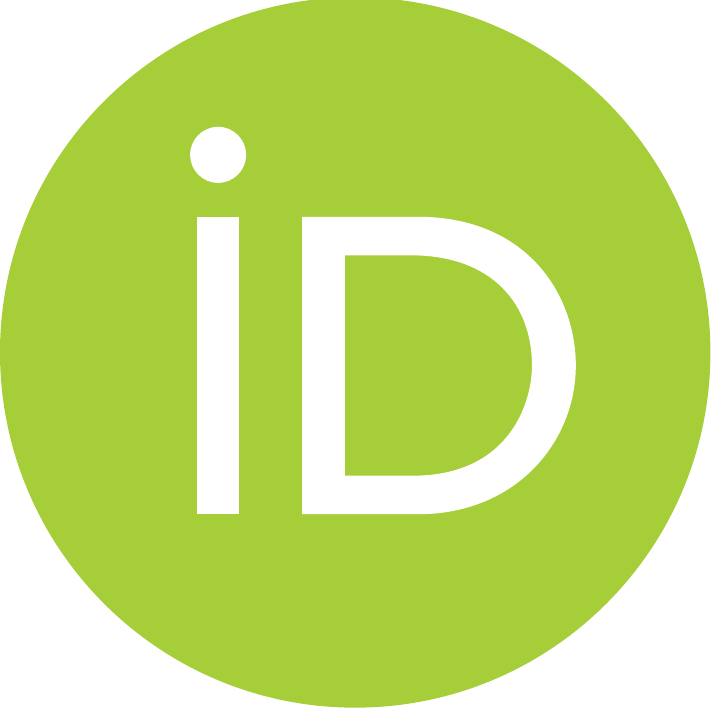}}}}
\newcommand{\ORCID}[2]{\textit{#1} \orcid{#2} \textcolor{blue}{\url{https://orcid.org/#2}}}
\crefname{assumption}{Assumption}{Assumptions}
\theoremstyle{definition}
\newtheorem{example}{Example}
\newcommand{\vph}{\varphi}
\newcommand{\vep}{\varepsilon}
\newcommand{\NH}{\mathcal{H}}
\newcommand{\R}{\mathbb{R}}
\newcommand{\PC}{\mathcal{PC}}
\newcommand{\F}{\widetilde F}
\newcommand{\Kr}{\mathcal{K}_r}
\newcommand{\<}{\leqslant}
\renewcommand{\>}{\geqslant}
\newcommand{\Int}{\int\limits}
\begin{document}

\title{Necessary and sufficient stability conditions for integral delay systems}

\author[1]{Reynaldo Ortiz}{\orcid{0000-0001-9353-1975}}

\author[2]{Alexey Egorov}{\orcid{0000-0001-7671-2467}}

\author[1]{Sabine Mondié*}{\orcid{0000-0002-0968-1899}}

\authormark{ORTIZ \textsc{et al}}

\address[1]{\orgdiv{Department of Automatic Control}, \orgname{CINVESTAV-IPN}, \orgaddress{\state{Mexico City}, \country{Mexico}}}

\address[2]{\orgdiv{Department of Control Theory}, \\ \orgname{St. Petersburg State University}, \\ \orgaddress{\state{St. Petersburg}, \country{Russia}}}

\corres{*Sabine Mondié, Department of Automatic Control, CINVESTAV-IPN, Av. Instituto Politécnico Nacional 2508, Mexico City 07360, Mexico. \\ \email{smondie@ctrl.cinvestav.mx}}

\presentaddress{Sabine Mondié, Department of Automatic Control, CINVESTAV-IPN, Av. Instituto Politécnico Nacional 2508, Mexico City 07360, Mexico}

\fundingInfo{Grants Conacyt A1-S-24796, SEP-CINVESTAV 155, Russian Foundation for Basic Research 19-01-00146}

\abstract[Summary]{A Lyapunov-Krasovskii functional with prescribed derivative whose construction does not require the stability of the system is introduced. It leads to the presentation of stability/instability theorems. By evaluating the functional at initial conditions depending on the fundamental matrix we are able to present necessary and sufficient stability conditions expressed exclusively in terms of the delay Lyapunov matrix for integral delay systems. Some examples illustrate and validate the stability conditions.}

\keywords{Integral delay systems, Renewal equation, Stability, Lyapunov matrix, Functionals}

\jnlcitation{\cname{%
\author{R. Ortiz},
\author{A. Egorov}, and
\author{S. Mondié}} (\cyear{2021}), 
\ctitle{Necessary and sufficient stability conditions for integral delay systems}, \cjournal{Int J Robust Nonlinear Control}, \cvol{2021;1--20}.}

\maketitle


\section{Introduction} \label{secIntro}


The integral delay equation (IDE) is a particular case of the renewal equation (when the kernel of the integral has a finite support), which has been introduced by Euler~\cite{Euler1760} in his studies on population dynamics and revisited by Lotka~\cite{Lotka1907}. It was adapted to the description of infectious diseases spread by Kermack and McHendrick~\cite{KermackMcKendrick1927}, leading to more complex Erlang-SEIR ordinary differential models with substages. The differential form, describing the number of individuals in a given class, and the integral form, focusing at the spread through time of infection by cohorts of infectors, are shown to be equivalent~\cite{Dushoff2018Equivalence}. The parameters characterizing the ordinary differential equations, in particular the basic reproductive number $R_0$ and the generation-interval distribution are connected through the Euler-Lotka equation. While the differential form is preferred for the evaluation of optimal mitigation strategies, epidemiologists agree that the generation-interval distribution is easier to infer~\cite{DushoffInfering2020} from contact tracing data at the early exponential growth stage of an outbreak. Moreover, the IDE description may have advantages in numerical simulations as it reduces compartmental descriptions to a single integral equation. It is not our purpose to review here the outstanding body of theoretical and practical studies on biological applications of integral delay systems~\cite{BellmanCooke,Cooke1976,Torrejon1976,London1973,ArinoDriessche2006,Breda2012}, but to point out its high relevance in the context of the current Covid-19 pandemics, as a motivation for our study of stability. 

IDEs also arise in the challenging and widespread engineering control problem of systems with input delays: in the well-known finite spectrum assignment~\cite{ManitusOlbrot1979} strategy, an integral control law allows, under a spectral controllability assumption, placing a closed-loop system finite number of poles anywhere in the left-hand side of the complex plane. Here, the stability of an IDE describing internal dynamics is shown to be crucial for the correct implementation of the control law when using numerical methods~\cite{Michiels2004}. 

In a more theoretical vein, the analysis of the stability of IDEs has received attention recently, in particular in the Lyapunov-Krasovskii/LMI's framework that provides sufficient stability conditions~\cite{M-Aguilar2010,MondieMelchor2012,Damak2014,Damak2015} and robust stability results~\cite{Melchor2016}. The aim of the present contribution is to present necessary and sufficient stability conditions for a class of IDE. The Lyapunov functionals and matrices framework presented in the book authored by Kharitonov~\cite{Kharitonov2013} is the privileged one in this pursuit, as attested by the past decade necessary and sufficient stability results on pointwise~\cite{Egorov2014}, distributed~\cite{EgorovCuvas2017}, and neutral type differential delays systems~\cite{Gomez2018}. 
The basic theory for integral equation is not covered in the above mentioned monograph, but in the case of exponentially stable systems, a functional with prescribed derivative is available~\cite{DMelchor2010} and a new definition of the Lyapunov matrix lead to necessary stability conditions expressed in terms of the Lyapunov matrix~\cite{Ortiz2019TAC}. Notice that the presented framework carries important issues on the Lyapunov matrices as the existence and uniqueness~\cite{BellmanCooke,Ortiz2019,Arismendi2019}, and of course its computation~\cite{Arismendi2017,Ortiz2018}.

The paper is organized as follows: The integral delay system is introduced in \Cref{secSystem}, some auxiliary elements are given in \Cref{secAuxEl}, new properties of the Lyapunov matrix proved without stability assumption, a crucial issue when looking for necessary and sufficient conditions, and a new Lyapunov-Krasovskii functional, which satisfies a prescribed derivative independently of the system stability are introduced in \Cref{secLyapMatr}. \Cref{secKeyResults} is devoted to key results required in the main proof; stability/instability results for a modified functional and a bilinear functional. We arrive at the main result in \Cref{secMain}: necessary and sufficient stability conditions expressed exclusively in terms of the delay Lyapunov matrix of the IDE. We validate the necessary and sufficient stability conditions by examples in \Cref{secExamples} and end the paper with some concluding remarks. For the sake of readability, most of the proofs are given in the appendix.

\textit{Notation:} The smallest eigenvalue of a symmetric matrix $Q$ is denoted by $\lambda_{\min}(Q)$, while notation $Q>0$, $Q \ngeqslant 0$ means that $Q$ is positive definite and not positive semi-definite, respectively. The block matrix of $r\times r$ blocks with the block $A_{ij}$ in the $i$-th row and $j$-th column is written as $\left[A_{ij}\right]_{i,j=1}^{r}$. The euclidean norm for vectors is denoted by $\|\cdot\|$, $\PC_h$ is the space of piecewise continuous bounded functions of dimension $n$ defined on $[-h,0)$, the weak derivative of a function $u$ with respect to its argument $t$ is written as $\dfrac{d}{dt}u$ or $u'$, if exists.

\section{The System} \label{secSystem}

Consider the linear integral delay system
\begin{equation} \label{eqSys}
x(t)=\int_{-h}^{0}F(\theta)x(t+\theta)d\theta, \quad t\>0,
\end{equation}
where $x(t)\in\R^n$, delay $h>0$. In this paper, function $x(t,\vph)$, $t\>-h$, is a solution of system~\eqref{eqSys}, corresponding to the initial function $\vph\in\PC_h$, if it is a piecewise-continuous function, defined on $[-h,\infty)$, satisfying the initial condition
\[ x(\theta,\vph)=\vph(\theta), \quad \theta\in[-h,0). \]

Kernel $F$ in formula~\eqref{eqSys} is assumed to be a function of bounded variation. Notice that without any loss of generality we can assume that at every point of segment $[-h,0]$ function $F$ is either left or right continuous.

Reduce now the initial value problem for system~\eqref{eqSys} to the renewal equation. If we introduce a simple extension
\begin{equation} \label{eqFtilde}
\F(t)=
\begin{cases}
F(t),& t\in[-h,0], \\
0, & \text{elsewhere},
\end{cases}
\end{equation}
of function $F$, we obtain the equation
\begin{equation} \label{eqRen}
x(t)=g(t) +\int_{0}^{t}\F(\theta-t)x(\theta)d\theta,\;\;t\>0,
\end{equation}
where
\begin{equation}
g(t)=
\begin{cases}
\Int_{-h}^{-t}F(\theta)\vph(t+\theta)d\theta,& t\in[0,h], \\
0, & \text{elsewhere},
\end{cases}
\end{equation}
is a continuous on $[0,\infty)$ function.

It has been shown by Bellman and Cooke~\cite{BellmanCooke} that in the scalar case equation~\eqref{eqRen} has a unique solution, and the solution is continuous, if function $g$ is. For the nonscalar case the result can be proved in a similar way.

The restriction of the solution to the interval $[t-h,t)$ is defined by $x_t(\vph)$:
\[ x_t(\vph)(\theta)=x(t+\theta,\vph), \quad\theta\in[-h,0). \]

\section{Auxiliary Elements} \label{secAuxEl}

\begin{definition}
The complex number $s$ is called an \textit{eigenvalue} of system~\eqref{eqSys}, if $\det H(s)=0$, where $H$ is the characteristic matrix:
\begin{equation} \label{eqCharM}
H(s)=I-\int_{-h}^{0}e^{s\theta}F(\theta)d\theta.
\end{equation}
\end{definition}

\begin{assumption} \label{assEig}
System~\eqref{eqSys} does not have pure imaginary eigenvalues.
\end{assumption}

We introduce the seminorm $\|\cdot\|_{\NH}$:
\[ \|\vph\|_\NH =\sqrt{\int_{-h}^{0} \left\|\vph(\theta)\right\|^2 d\theta}. \]

\begin{definition} \label{Def1}
System \eqref{eqSys} is said to be \textit{exponentially stable}, if there exist constants $\sigma>0$ and $\gamma\> 1$, such that for any $\vph\in\PC_h$
\[ \left\|x(t,\vph)\right\| \<\gamma\|\vph\|_\NH e^{-\sigma t}, \quad t\>0. \]
\end{definition}

The $n\times n$ matrix $K(t)$, $t\>-h$, is known as the fundamental matrix of~\eqref{eqSys} and is the unique solution of the equation
\begin{equation} \label{eqK}
K(t)=\int_{-h}^{0} F(\theta) K(t+\theta) d\theta, \quad t\>0,
\end{equation} 
with initial condition,
\begin{equation} \label{eqK0}
K(t)=K_0=\left(\int_{-h}^{0}F(\theta)d\theta-I\right)^{-1}, \quad t<0.
\end{equation}
Under~\Cref{assEig} the matrix in brackets is invertible. Notice that the matrix defined in~{\eqref{eqK}-\eqref{eqK0}} satisfies also the equation
\begin{equation} \label{eqKK}
K(t)=\int_{-h}^{0}K(t+\theta)F(\theta)d\theta, \quad t\>0,
\end{equation}
that can be proved via the Laplace transform, like it has been done for differential-difference systems~\cite{BellmanCooke}.

\begin{lemma} \label{thmLK0K0I}
The fundamental matrix $K$ is such that
\begin{equation} \label{eqK0K0I}
K(0)-K_0=I.
\end{equation}
\end{lemma}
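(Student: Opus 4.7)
The plan is to simply evaluate the defining equation \eqref{eqK} at the endpoint $t=0$ and read off what the integral yields in terms of the initial value $K_0$ from \eqref{eqK0}.

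Concretely, first I would set $t=0$ in \eqref{eqK} to obtain
\[ K(0)=\int_{-h}^{0}F(\theta)K(\theta)d\theta. \]
On the interval $[-h,0)$ the fundamental matrix is constantly equal to $K_0$ by \eqref{eqK0}, and since $F$ is of bounded variation (hence bounded), the single point $\theta=0$ contributes nothing to the Lebesgue integral. Therefore
\[ K(0)=\left(\int_{-h}^{0}F(\theta)d\theta\right)K_0. \]

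Next I would rewrite the bracket using the definition of $K_0$. From \eqref{eqK0} we have $K_0^{-1}=\int_{-h}^{0}F(\theta)d\theta-I$, so $\int_{-h}^{0}F(\theta)d\theta=K_0^{-1}+I$, and substituting gives
\[ K(0)=(K_0^{-1}+I)K_0=I+K_0, \]
which is the claim. The invertibility required to perform this last step is exactly what \Cref{assEig} guarantees.

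I do not foresee a serious obstacle here; the only mild subtlety is justifying that $K(\theta)$ may be replaced by $K_0$ inside the integral at $t=0$ despite the discontinuity at $\theta=0$, but this is immediate from the boundedness of $F$. In fact, the identity \eqref{eqK0K0I} is really the statement that the fundamental matrix has a unit jump discontinuity at the origin, which is a standard feature of such equations and is consistent with the parallel result for differential-difference systems referenced in the paper.
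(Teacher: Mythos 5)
Your proof is correct and follows essentially the same route as the paper: evaluate~\eqref{eqK} at $t=0$, replace $K(\theta)$ by $K_0$ on $[-h,0)$, and use the definition~\eqref{eqK0} of $K_0$ to conclude. The only cosmetic difference is that the paper phrases the last step as subtracting $K_0$ and factorizing it on the right, whereas you substitute $\int_{-h}^{0}F(\theta)\,d\theta=K_0^{-1}+I$; these are the same computation.
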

\begin{proof}
We evaluate equality~\eqref{eqK} (resp. equality~\eqref{eqKK}) at $t=0$, as $\theta\in[-h,0]$, then $K(\theta)=K_0$ almost everywhere. Subtracting and factorizing $K_0$ on the right (resp. on the left), \Cref{thmLK0K0I} follows by~\eqref{eqK0}.
\end{proof}

\begin{lemma} \label{thmKprime}
The fundamental matrix $K$ is absolutely continuous on $[0,\infty)$ and its weak derivative satisfies equation
\begin{equation} \label{eqWeakDerEq}
K'(t)= \F(-t)+\int_0^t K'(\theta)\F(\theta-t)d\theta,\;\; t\>0.
\end{equation}
\end{lemma}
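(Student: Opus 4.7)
The plan is to reduce \eqref{eqKK} to a renewal equation on $[0,\infty)$ by the change of variable $\tau=t+\theta$, yielding
\[
K(t)=G(t)+\int_{0}^{t} K(\tau)\,\F(\tau-t)\,d\tau,
\]
where $G(t)=K_0\int_{-h}^{-t}F(\theta)\,d\theta$ for $t\in[0,h]$ and $G(t)\equiv 0$ for $t>h$. Since $F$ is of bounded variation, $G$ is Lipschitz on $[0,\infty)$ with $G'(t)=-K_0\F(-t)$ almost everywhere.

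To obtain absolute continuity of $K$, I would use that $K$ is continuous on $[0,\infty)$ (by the renewal-equation solvability already quoted from~\cite{BellmanCooke}), hence bounded on each $[0,T]$. A direct estimate then shows that $t\mapsto\int_0^t K(\tau)\F(\tau-t)\,d\tau$ is Lipschitz whenever $\F$ is BV and $K$ is bounded: splitting $\F$ as a difference of monotone functions and bounding the increment via a shift of the variable of integration controls the second term, while the first term is trivially $O(\epsilon)$. Combined with $G$ Lipschitz, this gives $K$ Lipschitz, hence AC, on $[0,T]$ for every $T$.

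The delicate step is computing $K'$ without differentiating the BV kernel $\F$. I would sidestep this by inserting $K(\tau)=K(0)+\int_0^\tau K'(s)\,ds$ into the integral, applying Fubini, and introducing the antiderivative $\Phi(r):=\int_{-r}^0\F(\theta)\,d\theta$; this rewrites the equation as $K(t)=G(t)+K(0)\Phi(t)+\int_0^t K'(s)\Phi(t-s)\,ds$. Since $\Phi$ is Lipschitz with $\Phi(0)=0$ and $\Phi'(r)=\F(-r)$ a.e., a legitimate differentiation in $t$ now gives
\[
K'(t)=G'(t)+K(0)\F(-t)+\int_0^t K'(s)\,\F(s-t)\,ds.
\]

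Finally, the boundary contributions collapse: $G'(t)+K(0)\F(-t)=[K(0)-K_0]\F(-t)$, which by \Cref{thmLK0K0I} equals $\F(-t)$. This is exactly \eqref{eqWeakDerEq}. The main obstacle is precisely the BV-only regularity of $F$: a naive Leibniz rule applied to $\F(\tau-t)$ is not available, so the Fubini/antiderivative detour is essential, and the fact that the two leftover $\F(-t)$ contributions cancel cleanly rests on the jump identity $K(0)-K_0=I$ proved in \Cref{thmLK0K0I}.
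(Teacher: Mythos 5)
Your proof is correct, but it inverts the logical order of the paper's argument. The paper starts from \eqref{eqWeakDerEq} itself, viewed as a renewal equation for an \emph{unknown} function $K'$: the quoted Bellman--Cooke theory supplies a unique bounded, integrable solution, and the proof then verifies --- by essentially the same Fubini/antiderivative computation you carry out, with the same cancellation resting on $K(0)-K_0=I$ --- that the candidate $t\mapsto K(0)+\int_0^t K'(s)\,ds$ satisfies the rewritten equation \eqref{eqKeqToSubst}; uniqueness of the fundamental matrix then delivers absolute continuity and the derivative formula in one stroke. You instead establish regularity first, via a direct Lipschitz estimate on the renewal equation (the monotone-splitting/shift-of-variable trick for the BV kernel, together with continuity of $K$ on compacts), and only then differentiate. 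Your route costs this extra estimate but is more self-contained on the derivative side: you never need to posit the equation for $K'$ in advance or invoke solvability of a second renewal equation, and you obtain the slightly stronger conclusion that $K$ is locally Lipschitz (so $K'$ is locally essentially bounded) rather than merely absolutely continuous. The algebraic core --- passing from \eqref{eqKK} to \eqref{eqKeqToSubst}, swapping the order of integration against the antiderivative $\Phi$ of $\F$ so as to avoid differentiating the BV kernel, and collapsing the boundary terms through \Cref{thmLK0K0I} --- is common to both proofs.
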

\begin{proof}
The renewal equation~\eqref{eqWeakDerEq} has a unique solution $K'$, which is Lebesgue integrable and bounded~\cite{BellmanCooke}. It remains to show that the solution is a weak derivative of the fundamental matrix.

As the fundamental matrix is unique, it remains to show that the function
\begin{equation} \label{eqKasAC}
K(t)=
\begin{cases}
K(0)+\Int_0^t K'(s)ds, & t\>0, \\
K_0, & t<0,
\end{cases}
\end{equation}
satisfies equality~\eqref{eqKK}.

Notice that the equality can be rewritten in the form
\begin{equation} \label{eqKeqToSubst}
K(t)= K_0\int_{-h}^{-t}\F(\theta)d\theta +\int_0^t K(\theta)\F(\theta-t)d\theta,\;\; t\>0.
\end{equation}
To finish the proof it remains to substitute~\eqref{eqKasAC} into~\eqref{eqKeqToSubst}, and obtain an identity, using equality~\eqref{eqWeakDerEq}.
\end{proof}

\begin{corollary}
For exponentially stable systems, functions $\|K(t)\|$ and $\|K'(t)\|$, $t>0$, exponentially decrease to zero while $t$ increases.
\end{corollary}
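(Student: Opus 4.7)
The plan is to view the columns of $K$ and of $K'$ as solutions of system~\eqref{eqSys} (possibly with a shifted time origin) and invoke \Cref{Def1} directly.

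First I would bound $\|K(t)\|$. For each $j \in \{1, \ldots, n\}$, set the constant initial function $\varphi_j(\theta) := K_0 e_j$ for $\theta \in [-h, 0)$, which lies in $\PC_h$. By~\eqref{eqK0} we have $K(\theta) e_j = K_0 e_j = \varphi_j(\theta)$ on $[-h, 0)$, and by~\eqref{eqK} the column $K(\cdot)e_j$ satisfies~\eqref{eqSys} on $[0, \infty)$. Uniqueness of solutions (Bellman--Cooke, extended to the vector case as noted after~\eqref{eqRen}) then forces $K(t)e_j = x(t, \varphi_j)$ for $t \geq 0$. Applying \Cref{Def1} yields
\[ \|K(t) e_j\| \leq \gamma \|\varphi_j\|_\NH e^{-\sigma t} = \gamma \sqrt{h}\, \|K_0 e_j\|\, e^{-\sigma t}, \]
and taking a maximum over $j$ gives $\|K(t)\| \leq C_1 e^{-\sigma t}$ for a suitable $C_1 > 0$.

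For $K'$ I would differentiate~\eqref{eqK} with respect to $t$. Provided $t \geq h$, the argument $t + \theta$ of $K$ in the integrand lies entirely in $[0, \infty)$, where by \Cref{thmKprime} the function $K$ is absolutely continuous. Writing~\eqref{eqK} as an increment over $[t_1, t_2] \subset [h, \infty)$ and exchanging the order of integration (Fubini, using boundedness of $F$ and of $K'$) produces the identity
\[ K'(t) = \int_{-h}^{0} F(\theta) K'(t + \theta)\, d\theta, \qquad t \geq h, \]
so that each column of $K'(h + \cdot)$ satisfies~\eqref{eqSys} from time $h$ onward. Boundedness of $K'$ on $[0, h]$ (immediate from~\eqref{eqWeakDerEq}, the boundedness of $\widetilde F$, and the Lebesgue integrability of $K'$ quoted inside the proof of \Cref{thmKprime}) ensures that the data $K'(h + \cdot)|_{[-h,0)}$ is admissible, and a second application of \Cref{Def1} gives $\|K'(t)\| \leq C_2 e^{-\sigma (t - h)}$ for $t \geq h$. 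Combining this with the bound on $[0, h]$ produces the required exponential decay.

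The main obstacle is the second step: justifying the passage of $d/dt$ through the integral in~\eqref{eqK} to obtain the renewal-type identity for $K'$, and confirming that the shifted data $K'(h+\cdot)|_{[-h,0)}$ fits the hypotheses of \Cref{Def1}. The first difficulty is circumvented by restricting to $t \geq h$, where \Cref{thmKprime} guarantees the AC-regularity needed to commute differentiation with the $\theta$-integral via Fubini; the second is handled by the explicit representation of $K'$ on $[0, h]$ coming from~\eqref{eqWeakDerEq}.
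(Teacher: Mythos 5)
The paper states this corollary without any proof, so there is nothing to match your argument against; what you propose is a correct and essentially canonical way to supply the missing justification. For $\|K(t)\|$ your argument is exactly right: by \eqref{eqK}--\eqref{eqK0} each column $K(\cdot)e_j$ is the solution of~\eqref{eqSys} with constant initial function $K_0e_j\in\PC_h$, so \Cref{Def1} gives $\|K(t)e_j\|\<\gamma\sqrt{h}\,\|K_0e_j\|e^{-\sigma t}$ directly. For $\|K'(t)\|$ your increment-plus-Fubini derivation of $K'(t)=\int_{-h}^{0}F(\theta)K'(t+\theta)\,d\theta$ for $t\>h$ is sound (the restriction to $t\>h$ is genuinely needed, since for $t<h$ the boundary term $\F(-t)$ from~\eqref{eqWeakDerEq} survives), and combining the resulting decay on $[h,\infty)$ with boundedness of $K'$ on $[0,h]$ finishes the job. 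Two small caveats, both at the same level of informality as the paper itself: (i) \Cref{Def1} is stated for $\vph\in\PC_h$, whereas $K'|_{[0,h)}$ is only known to be bounded and regulated (it inherits the at most countably many jumps of the bounded-variation kernel through $\F(-t)$ in~\eqref{eqWeakDerEq}), and the differentiated identity holds a priori only almost everywhere, so strictly one should either enlarge the admissible class of initial data or pass through the equivalent renewal formulation~\eqref{eqRen}; (ii) a slightly shorter route to the same conclusion is to observe that in~\eqref{eqWeakDerEq} the forcing term $\F(-t)$ vanishes for $t>h$, so that $K'$ satisfies the companion equation~\eqref{eqKK} on $(h,\infty)$, whose exponential stability is equivalent to that of~\eqref{eqSys} since the characteristic matrices are transposes of one another. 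Neither point undermines your plan.
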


Notice that by changing the order of integration the Cauchy formula~\cite{Ortiz2019TAC} can be rewritten as
\begin{equation} \label{eqCauchy1}
x(t,\vph)=\int_{-h}^{0}\dfrac{d}{dt}\int_{-h}^{\xi}K(t-\xi+\theta)F(\theta)d\theta \vph(\xi)d\xi, \quad t\>0.
\end{equation}

\section{The Lyapunov Matrix and the Complete Type Functional} \label{secLyapMatr}

\subsection{The Case of Exponential Stability} \label{secLMexpst}

Introduce the Lyapunov matrix for equation~\eqref{eqSys}.

\begin{lemma}\cite{Ortiz2019TAC}
Let~\eqref{eqSys} be exponentially stable. For every $n \times n$ matrix $W$ the matrix
\begin{equation} \label{eqU}
U(\tau)=\int_{0}^{\infty}\left(K(t)-K_0\right)^T WK(t+\tau)dt
\end{equation}
is well defined for $\tau\in\R$.
\end{lemma}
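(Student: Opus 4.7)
The plan is to produce an absolutely integrable majorant for $\bigl\|(K(t)-K_0)^T W K(t+\tau)\bigr\|$ on $[0,\infty)$ and conclude by absolute convergence. Two ingredients drive the estimate: the corollary just stated supplies constants $c,\alpha>0$ with $\|K(t)\|\leqslant c e^{-\alpha t}$ for $t\geqslant 0$, and formula~\eqref{eqK0} gives $K(t)=K_0$ for $t<0$.

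From the exponential bound, $\|K(t)-K_0\|\leqslant c+\|K_0\|$ uniformly on $[0,\infty)$, so the first factor of the integrand is essentially bounded and contributes only a constant to the majorant. It therefore suffices to dominate $\|K(t+\tau)\|$ by an integrable function of $t$. I would then split according to the sign of $\tau$.

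If $\tau\geqslant 0$, then $t+\tau\geqslant 0$ throughout, and the estimate $\|K(t+\tau)\|\leqslant c e^{-\alpha\tau}e^{-\alpha t}$ supplies an exponentially decaying integrable majorant on all of $[0,\infty)$. If $\tau<0$, I would split the integral at $t=-\tau$: on the compact piece $[0,-\tau]$ one has $t+\tau\leqslant 0$, so $K(t+\tau)=K_0$ by~\eqref{eqK0} and the integrand is bounded on a bounded interval; on $[-\tau,\infty)$ the bound $\|K(t+\tau)\|\leqslant c e^{-\alpha(t+\tau)}$ again produces an integrable majorant.

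Either way the defining integral converges absolutely for every $\tau\in\R$. The argument is essentially routine; the only conceptual point is the dichotomy between the exponentially decaying piece of $K$ on $[0,\infty)$ and its constant history on $(-\infty,0)$, which makes the $\tau<0$ case transparent. I do not foresee any real obstacle beyond carefully tracking the split of the integration domain, and the same dominated-convergence style estimate will later yield continuity of $U$ in $\tau$, should that be needed.
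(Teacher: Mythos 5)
Your argument is correct: the splitting at $t=-\tau$ for negative $\tau$, the use of $K(t)=K_0$ on the constant history, and the exponential decay of $\|K(t+\tau)\|$ on the tail together give an absolutely integrable majorant, which is all that is needed. Note that the paper itself offers no proof of this lemma --- it is imported by citation from an earlier reference --- so there is nothing internal to compare against; your estimate is the standard and expected one, and it is consistent with the corollary on exponential decay of $\|K\|$ that the paper states just before introducing $U$.
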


The matrix-valued function $U$ is called the Lyapunov matrix for system~\eqref{eqSys}. This matrix is the core element of the necessary and sufficient stability conditions in the main result of this paper. For an exponentially stable system~\eqref{eqSys}, the Lyapunov matrix~\eqref{eqU} associated with a symmetric matrix $W$ satisfies the dynamic property
\begin{equation} \label{eqDyn}
U(\tau)=\int_{-h}^{0}U(\tau+\theta)F(\theta)d\theta, \quad \tau\> 0,
\end{equation}
and the symmetry property
\begin{equation} \label{eqSym}
U(-\tau) =U^T(\tau) +P-\tau K_0^TWK_0, \quad\tau\in\R,
\end{equation}
where the skew-symmetric matrix $P$ is given by
\begin{gather}
P=S^T WK_0 -K_0^TW S, \label{eqP} \\
S=K_0\int_{-h}^0\theta F(\theta)d\theta K_0. \label{eqS}
\end{gather}

We will need the following technical result.

\begin{lemma} \label{thmUprimeprime}
If system~\eqref{eqSys} is exponentially stable, for $\tau<0$
\begin{equation} \label{eqUprimeprime}
U''(\tau) =-K'^T(-\tau)W -\int_{0}^{\infty}K'^T(t-\tau)WK'(t)dt.
\end{equation}
\end{lemma}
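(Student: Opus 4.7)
The plan is to split the integral defining $U(\tau)$ at the point where the shifted argument changes sign, differentiate twice under the integral sign using the absolute continuity of $K$ established in \Cref{thmKprime}, and finally collapse two boundary terms via \Cref{thmLK0K0I}.

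First I would observe that for $\tau<0$ and $t\in[0,-\tau]$ the argument $t+\tau\in[\tau,0]$ lies to the left of the origin, so $K(t+\tau)=K_0$. After substituting $s=t+\tau$ in the tail, this rewrites \eqref{eqU} as
\[
U(\tau)=\int_0^{-\tau}(K(t)-K_0)^T W K_0\,dt+\int_0^\infty\bigl(K(s-\tau)-K_0\bigr)^T W K(s)\,ds.
\]
The advantage of this form is that $\tau$ now appears only in a variable limit of the first integral and in the innermost argument of $K$ in the tail, not under the subtraction of $K_0$ there.

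Next I would differentiate once. The first piece contributes $-(K(-\tau)-K_0)^T W K_0$ by the fundamental theorem of calculus. In the tail, \Cref{thmKprime} gives $\partial_\tau K(s-\tau)=-K'(s-\tau)$, and exponential decay of $K$ and $K'$ (from the preceding corollary) provides an integrable majorant that legitimises differentiation under the integral. Reverting to $u=s-\tau$ puts the result in the form
\[
U'(\tau)=-(K(-\tau)-K_0)^T W K_0-\int_{-\tau}^\infty K'^T(u)\,W K(u+\tau)\,du,
\]
in which $\tau$ is now absent from $K'$, so the next differentiation is cleaner.

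Then I would differentiate a second time, again applying Leibniz. The first summand yields $K'^T(-\tau)WK_0$; the lower limit $-\tau$ of the tail contributes the boundary term $-K'^T(-\tau)WK(0)$; these two combine, using $K(0)-K_0=I$ from \Cref{thmLK0K0I}, into the single leading term $-K'^T(-\tau)W$. The remaining $-\int_{-\tau}^\infty K'^T(u)WK'(u+\tau)\,du$ becomes, after shifting $t=u+\tau$, exactly $-\int_0^\infty K'^T(t-\tau)WK'(t)\,dt$, which establishes \eqref{eqUprimeprime}.

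The main technical obstacle is justifying differentiation under the integral, since \Cref{thmKprime} provides only a weak derivative; exponential stability supplies uniform integrable dominations for $\|K\|$ and $\|K'\|$ on $[0,\infty)$, which together with absolute continuity make both passes of the Leibniz rule legitimate. The other subtle point is the bookkeeping of the boundary contribution at the lower limit $-\tau$, where the precise cancellation $K(0)-K_0=I$ supplied by \Cref{thmLK0K0I} is what produces the clean leading term $-K'^T(-\tau)W$ rather than two separate pieces.
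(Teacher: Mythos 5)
Your proposal is correct and follows essentially the same route as the paper: split the integral at $t=-\tau$ using $K(t+\tau)=K_0$ there, apply the Leibniz rule twice, and invoke $K(0)-K_0=I$ to merge the boundary terms into $-K'^T(-\tau)W$. The only difference is bookkeeping — you shift the integration variable before each differentiation while the paper shifts afterwards — and your explicit justification of differentiation under the integral via the exponential decay of $\|K\|$ and $\|K'\|$ is a welcome addition that the paper leaves implicit.
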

\begin{proof}
If $\tau<0$,
\begin{equation}
U(\tau)=\int_{0}^{\infty}\left(K(t)-K_0\right)^TWK(t+\tau)dt =\left(\int_{0}^{-\tau} K^T(t)dt+\tau K_0^T\right)WK_0 +\int_{-\tau}^{\infty}\left(K(t)-K_0\right)^T WK(t+\tau)dt.
\end{equation}
The first derivative is equal to
\begin{multline}
U'(\tau) =-K^T(-\tau)WK_0 +K_0^TWK_0 +\left(K(-\tau)-K_0\right)^TWK(0) +\int_{-\tau}^{\infty}\left(K(t)-K_0\right)^TWK'(t+\tau)dt\\
=K^T(-\tau)W-K_0^TW +\int_{0}^{\infty}\left(K(t-\tau)-K_0\right)^TWK'(t)dt.
\end{multline}
Therefore, the second derivative is given by formula~\eqref{eqUprimeprime}.
\end{proof}

Based on the Lyapunov matrix we define the quadratic functional $v_0(\vph)$, $\vph\in\PC_h$, that satisfies the equality
\begin{equation} \label{eqDv0}
\dfrac{d}{dt}v_0(x_t(\vph)) =-x^T(t,\vph) W x(t,\vph), \quad t\>0,
\end{equation}
along the trajectories of~\eqref{eqSys} for a given symmetric positive definite matrix $W$.

Integrating~\eqref{eqDv0} from $t=0$ to $t=\mathcal{T}>0$ and the assumption that~\eqref{eqSys} is exponentially stable, $x_{\mathcal{T}}\rightarrow 0$ as $\mathcal{T}\rightarrow\infty$, implies that
\begin{equation} \label{eqv0int}
v_0(\vph)=\int_{0}^{\infty}x^T(t,\vph) W x(t,\vph)dt.
\end{equation}
Substitution of the Cauchy formula~\eqref{eqCauchy1} into~\eqref{eqv0int} yields
\begin{equation}
v_0(\vph)=\int_{0}^{\infty}\left(\int_{-h}^{0}\vph^T(\xi_1)\dfrac{d}{dt} \int_{-h}^{\xi_1}F^T(\theta_1)K^T(t-\xi_1+\theta_1)d\theta_1 d\xi_1\right) W \left(\int_{-h}^{0}\dfrac{d}{dt} \int_{-h}^{\xi_2} K(t-\xi_2+\theta_2) F(\theta_2)d\theta_2 \vph(\xi_2)d\xi_2 \right)dt.
\end{equation}
Changing the order of integration, we obtain
\begin{equation} \label{eqv0Q}
v_0(\vph)=\int_{-h}^{0}\int_{-h}^{0}\vph^T(\xi_1)Q(\xi_1,\xi_2)\vph(\xi_2)d\xi_2 d\xi_1,
\end{equation}
where, for $\xi_1,\,\xi_2\in[-h,0]$,
\begin{equation} \label{eqQIni}
Q(\xi_1,\xi_2)=\int_{0}^{\infty}\left(\dfrac{d}{dt}\int_{-h}^{\xi_1}F^T(\theta_1) K^T(t-\xi_1+\theta_1) d\theta_1\right) W\left( \dfrac{d}{dt}\int_{-h}^{\xi_2}K(t-\xi_2+\theta_2)F(\theta_2) d\theta_2\right)dt.
\end{equation}

The proof of the following theorem is given in~\Cref{appQ}.

\begin{theorem} \label{thmQ}
Let system~\eqref{eqSys} be exponentially stable. For $\xi_1,\,\xi_2\in[-h,0]$
\begin{equation} \label{eqQ}
Q(\xi_1,\xi_2) =-\int_{-h}^{\xi_2}U''(\xi_1-\xi_2+\theta)F(\theta)d\theta +\int_{\xi_1}^{0}\int_{-h}^{\xi_2} F^T(\theta_1)U''(\xi_1-\theta_1-\xi_2+\theta_2) F(\theta_2)d\theta_2d\theta_1.
\end{equation}
\end{theorem}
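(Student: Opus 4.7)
The plan is to compute $Q(\xi_1,\xi_2)$ directly from~\eqref{eqQIni} and recognise the outcome as~\eqref{eqQ} after substituting the expression~\eqref{eqUprimeprime} for $U''$ from~\Cref{thmUprimeprime}. Setting $\beta(t,\xi):=\int_{-h}^{\xi}K(t-\xi+\theta)F(\theta)d\theta$, the integrand in~\eqref{eqQIni} is $\beta_t^{T}(t,\xi_1)\,W\,\beta_t(t,\xi_2)$, so the key intermediate step is the computation of the weak derivative $\beta_t(t,\xi)$.

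Because $K$ is constant ($=K_0$) on $[-h,0)$, absolutely continuous on $[0,\infty)$ by~\Cref{thmKprime}, and jumps by $I$ at the origin by~\Cref{thmLK0K0I}, splitting the integral defining $\beta$ at the point $\theta=\xi-t$ where the argument of $K$ crosses zero and differentiating carefully gives
\[
\beta_t(t,\xi)=\F(\xi-t)+\int_{-h}^{\xi}K'(t-\xi+\theta)F(\theta)d\theta,\quad t>0,
\]
with the convention $K'(u)=0$ for $u<0$. Substituting this expression (and its transposed analogue for $\beta_t^{T}(t,\xi_1)$) into~\eqref{eqQIni} expands $Q(\xi_1,\xi_2)$ as a sum of four integrals corresponding to the four pairings of $\F$ and the $K'$-integral in the two factors.

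In parallel, the arguments of $U''$ appearing on the right-hand side of~\eqref{eqQ} are always non-positive, since $\xi_1-\xi_2+\theta\<\xi_1\<0$ for $\theta\in[-h,\xi_2]$ and $\xi_1-\theta_1-\xi_2+\theta_2\<0$ for $\theta_1\in[\xi_1,0]$, $\theta_2\in[-h,\xi_2]$; hence~\Cref{thmUprimeprime} is applicable. Substituting~\eqref{eqUprimeprime} into~\eqref{eqQ} likewise splits the right-hand side into four summands: two containing the single-term $-K'^{T}(-\tau)W$ and two containing the convolution $\int_0^{\infty}K'^{T}(t-\tau)WK'(t)dt$.

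The remaining step is to match the four summands of $Q$ with those of the right-hand side. The double-$K'$ summand of $Q$ aligns with the $\int K'^{T}WK'$ parts of the right-hand side after the change of variable $s=t-\xi_2+\theta_2$ and Fubini; the resulting triple integrals differ only in the range of $\theta_1$ (namely $[-h,\xi_1]$ from $Q$ versus $[\xi_1,0]$ from~\eqref{eqQ}, together covering $[-h,0]$), and the mismatch is absorbed by the mixed $\F K'$ and $K'\F$ contributions from $Q$ upon one application of the renewal equation~\eqref{eqWeakDerEq} to $K'$. Finally, the purely-$\F$ summand of $Q$ pairs with the single-$K'^{T}W$ terms of the right-hand side after expanding $K'^{T}$ once more via~\eqref{eqWeakDerEq}, which converts a convolution of $\F$ against $K'$ into a single shifted copy of $K'$. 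The main obstacle is precisely this book-keeping: aligning shifts and integration ranges, splitting integrals at the right points, and invoking the renewal equation of $K'$ consistently so that all spurious contributions cancel and~\eqref{eqQ} emerges.
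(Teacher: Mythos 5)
Your argument is correct and the computation does close, but it is organized differently from the paper's. The paper does not expand both factors of the integrand of~\eqref{eqQIni} symmetrically: it first applies the dynamic equation~\eqref{eqKK} to the \emph{first} factor, rewriting it as $K'^T(t-\xi_1)-\int_{\xi_1}^{0}F^T(\theta_1)K'^T(t-\xi_1+\theta_1)\,d\theta_1$. This reproduces at once the asymmetric two-term shape of~\eqref{eqQ} and reduces the theorem to a single lemma, namely $\int_0^\infty K'^T(t-\Delta)W\bigl(\frac{d}{dt}\int_{-h}^{\xi}K(t-\xi+\theta)F(\theta)\,d\theta\bigr)dt=-\int_{-h}^{\xi}U''(\Delta-\xi+\theta)F(\theta)\,d\theta$, whose proof is essentially your treatment of the second factor (the $\F$-splitting plus~\Cref{thmUprimeprime}). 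You instead expand both factors and must then restore the asymmetry at the end via the transpose of~\eqref{eqWeakDerEq}; the paper trades that closing bookkeeping for the upfront use of~\eqref{eqKK}. Your route does work --- writing $Q=A+B+C+D$ for the pure-$\F$ summand, the mixed summand with $\F^T(\xi_1-t)$ on the left, the mixed summand with $\F(\xi_2-t)$ on the right, and the double-$K'$ summand, a single application of $K'^T(u)=\F^T(-u)+\int_{-h}^{0}F^T(\theta)K'^T(u+\theta)\,d\theta$ turns the right-hand side of~\eqref{eqQ} into exactly $A+B+C+D$ --- but your stated pairing is slightly off: the $-K'^TW$ pieces of $U''$, summed over both terms of~\eqref{eqQ}, account for $A+C$, while the convolution pieces account for $B+D$; that is, each piece of $U''$ absorbs one pure and one mixed summand, rather than both mixed summands being absorbed in the double-$K'$ matching as you describe. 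This is a harmless reshuffling since the totals agree, but it is precisely the bookkeeping you flag as the main obstacle, so it would need to be pinned down to turn the sketch into a proof.
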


\subsection{The General Case} \label{secLMGeneralCase}

Properties~\eqref{eqDyn} and~\eqref{eqSym} allow introducing a new definition of the Lyapunov matrix, which does not require the exponential stability of system~\eqref{eqSys}.

\begin{definition} \label{defLyap}
Let $W$ be a positive definite matrix. Function $U(\tau)\in\R^{n\times n}$, $\tau\in\R$, which is absolutely continuous and has bounded first and second weak derivatives on any segment $[a,b]$, such that $0\not\in(a,b)$ (i.\:e., the set does not contain zero as its interior point), is called \textit{the Lyapunov matrix} for system~\eqref{eqSys}, if it satisfies the dynamic property~\eqref{eqDyn} and the symmetry property~\eqref{eqSym}.
\end{definition}

\begin{remark}
The problem of the existence of the Lyapunov matrix arises. It has been shown in~\cite{Ortiz2019} that for a special case of system~\eqref{eqSys} the Lyapunov matrix exists and is unique if and only if the Lyapunov condition holds, i.\:e., the system has no eigenvalues, which are symmetric with respect to zero. This result is expected to be also true in the general case, but requires a separate investigation.
\end{remark}

\begin{remark}
It is worth mentioning that as shown in~\Cref{secLMexpst}, in the case of exponential stability matrix~\eqref{eqU} satisfies the new definition. 
\end{remark}

We present new properties that do not rely on the system stability, a feature that is crucial in the proof of sufficient stability conditions. The proof of the following lemma is in~\Cref{appLMprop1}.

\begin{lemma} \label{thmGDP}
The Lyapunov matrix satisfies the following properties for $\tau\in\R$:
\begin{equation} \label{eqAlgNeg}
U(\tau) =\int_{-h}^{0}F^T(\theta)U(\tau-\theta)d\theta +WS -W\int_0^{\tau}K(s)ds,
\end{equation}
\begin{equation} \label{eqAlg}
U(\tau)=\int_{-h}^{0}U(\tau+\theta)F(\theta)d\theta +\int_{-\tau}^0\left(K(s)-K_0\right)^T ds W.
\end{equation}
\end{lemma}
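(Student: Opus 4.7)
The plan is to prove the backward identity \eqref{eqAlgNeg} for all $\tau\in\R$; the companion identity \eqref{eqAlg} then follows automatically by the transpose--symmetry duality between the two statements. Indeed, transposing \eqref{eqAlgNeg} at $-\tau$ and substituting $U^T(\sigma)=U(-\sigma)-P+\sigma K_0^T W K_0$ (from \eqref{eqSym}) into each $U$-term produces \eqref{eqAlg} at $\tau$, modulo the boundary bookkeeping $\int_{-\sigma}^0(K(s)-K_0)^T\,ds=-\sigma K_0^T -\int_0^{-\sigma}K^T(s)\,ds$ and the constant cancellation $-K_0^{-T}P+K_0^{-T}S^TWK_0=WS$, which comes directly from $P=S^TWK_0-K_0^TWS$.

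For the case $\tau\le 0$ of \eqref{eqAlgNeg}: since $K(s)=K_0$ for $s<0$ by \eqref{eqK0}, the integral $\int_0^{\tau}K(s)\,ds$ reduces to $\tau K_0$, so the claim becomes $U(\tau)=\int_{-h}^0 F^T(\theta) U(\tau-\theta)\,d\theta+WS-\tau WK_0$. I would establish this by applying \eqref{eqDyn} at the non-negative argument $-\tau$ and then rewriting both $U(-\tau)$ and each $U(-\tau+\theta)$ in the integrand via \eqref{eqSym} in terms of $U^T$ at the opposite argument. Expanding, using $F_0:=\int_{-h}^0 F(\theta)\,d\theta=K_0^{-1}+I$ together with $\int_{-h}^0 \theta F(\theta)\,d\theta=K_0^{-1}SK_0^{-1}$, and applying the cancellation $-K_0^{-T}P+K_0^{-T}S^TWK_0=WS$, the constant terms collapse to $WS$; transposing the resulting identity for $U^T(\tau)$ delivers the required form.

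For the case $\tau>0$ of \eqref{eqAlgNeg}: I would introduce the discrepancy
\[ V(\tau):=U(\tau)-\int_{-h}^0 F^T(\theta) U(\tau-\theta)\,d\theta-WS+W\int_0^{\tau}K(s)\,ds \]
and show $V\equiv 0$ on $[0,\infty)$ via the initial condition $V(0)=0$ (inherited from the $\tau\le 0$ case by continuity of both sides at $0$) and the vanishing of $V'$ on $(0,\infty)$. The derivative identity $U'(\tau)+WK(\tau)=\int_{-h}^0 F^T(\theta) U'(\tau-\theta)\,d\theta$ is obtained by differentiating \eqref{eqDyn}, combining with the differentiated symmetry $U'^T(\tau)+U'(-\tau)=K_0^TWK_0$, and invoking the fundamental-matrix identity $\int_{-h}^0 (K(u+\theta)-K_0)F(\theta)\,d\theta=K(u)-K(0)$ for $u\ge 0$, which follows from \eqref{eqKK} after subtracting $K_0F_0=K_0+I=K(0)$ (the latter equality being \eqref{eqK0K0I}).

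The principal technical obstacle is the algebraic bookkeeping of the constants $K_0$, $S$, $P$, and $F_0$: the cancellation $-K_0^{-T}P+K_0^{-T}S^TWK_0=WS$ is the critical identity in the easy half and in the duality transfer, while in the hard half the jump $K(0)-K_0=I$ is exactly what produces the extra $-WK(\tau)$ term needed to close the derivative identity on $(0,\infty)$ and match with its analogue on $(-\infty,0)$, where $K$ is the constant $K_0$.
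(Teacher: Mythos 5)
Your overall architecture (prove \eqref{eqAlgNeg} first, deduce \eqref{eqAlg} by transposing and substituting the symmetry property \eqref{eqSym}) coincides with the paper's, and your constant bookkeeping is correct: $-K_0^{-T}P+K_0^{-T}S^TWK_0=WS$, $F_0=K_0^{-1}+I$, and $K_0F_0=K(0)$ all check out. The case $\tau\le 0$ is also fine and is exactly the identity \eqref{eqDnegative} that the paper quotes.

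The gap is in your treatment of $\tau>0$. You propose to show $V'\equiv 0$ on $(0,\infty)$ via the identity $U'(\tau)+WK(\tau)=\int_{-h}^0 F^T(\theta)U'(\tau-\theta)\,d\theta$, claiming it follows from differentiating \eqref{eqDyn} and combining with the differentiated symmetry $U'^T(\tau)+U'(-\tau)=K_0^TWK_0$. That identity is precisely \Cref{thmU'}, which the paper obtains \emph{as a corollary of} \Cref{thmGDP}, so assuming it here is circular unless you derive it independently — and the derivation you sketch does not deliver it at positive arguments. Concretely: substituting the differentiated symmetry into the differentiated dynamic property $U'(\tau)=\int_{-h}^0 U'(\tau+\theta)F(\theta)\,d\theta$ ($\tau>0$) and transposing yields, after using $I-F_0=-K_0^{-1}$, only
\begin{equation}
U'(\sigma)=\int_{-h}^{0}F^T(\theta)U'(\sigma-\theta)\,d\theta-WK_0,\qquad \sigma=-\tau<0,
\end{equation}
which is just the derivative of the already-established negative-argument case. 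The left-convolution identity at \emph{positive} $\tau$ involves $U'$ on $[\tau,\tau+h]$ and is not reachable by these pointwise manipulations: the symmetry property flips you to negative arguments, where the dynamic property \eqref{eqDyn} no longer applies. The fundamental-matrix identity $\int_{-h}^0(K(u+\theta)-K_0)F(\theta)\,d\theta=K(u)-K(0)$ is true but does not bridge this.

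What is missing is the paper's key idea: for $\tau\ge 0$ one shows that the discrepancy (your $V$, the paper's $G$) satisfies the \emph{homogeneous renewal equation} $G(\tau)=\int_{-h}^0 G(\tau+\theta)F(\theta)\,d\theta$ — each of the three constituent terms satisfies it, the term $R(\tau)=S-\int_0^\tau K(s)\,ds$ by a short computation using \eqref{eqKK} and \eqref{eqS} — and that $G$ vanishes identically for $\tau<0$ by the easy case. Uniqueness of solutions of the renewal equation then forces $G\equiv 0$ on $[0,\infty)$. Your proof would close if you replaced the pointwise derivative argument by this propagation-via-uniqueness step (applied to $V$ itself or, equivalently, interval by interval on $[kh,(k+1)h]$); as written, the hard half does not go through.
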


\begin{corollary} \label{thmU'}
For $\tau\in\R$
\begin{equation} \label{eqAlgNegDer}
U'(\tau)=\int_{-h}^{0}F^T(\theta)U'(\tau-\theta)d\theta -WK(\tau).
\end{equation}
\end{corollary}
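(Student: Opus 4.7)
The natural strategy is to differentiate the algebraic identity~\eqref{eqAlgNeg} from \Cref{thmGDP} with respect to $\tau$. Since $U$ is absolutely continuous with bounded weak derivatives on any segment not containing~$0$ as an interior point, and $K$ is absolutely continuous and bounded on $[0,\infty)$ by \Cref{thmKprime} (and constantly $K_0$ on $(-\infty,0)$), every term of~\eqref{eqAlgNeg} is differentiable almost everywhere. The constant matrix $WS$ contributes nothing. The last term $-W\int_0^\tau K(s)\,ds$ has weak derivative $-WK(\tau)$ by the fundamental theorem of calculus for Lebesgue-integrable functions. Thus it only remains to show that the weak derivative of the convolution-type term $\int_{-h}^0 F^T(\theta)U(\tau-\theta)\,d\theta$ equals $\int_{-h}^0 F^T(\theta)U'(\tau-\theta)\,d\theta$.

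For that differentiation under the integral, I would use the absolute continuity of $U$ directly: for $\tau$ and $\tau+\varepsilon$ on the same side of $0$ (or more generally, such that the segment $[\tau-\theta,\tau+\varepsilon-\theta]$ avoids $0$ as an interior point for a.e.\ $\theta\in[-h,0]$), write
\[
\frac{U(\tau+\varepsilon-\theta)-U(\tau-\theta)}{\varepsilon} =\frac{1}{\varepsilon}\int_{\tau-\theta}^{\tau+\varepsilon-\theta}U'(s)\,ds.
\]
Since $F$ is of bounded variation (hence bounded) and $U'$ is bounded on the relevant segment, Fubini's theorem and Lebesgue differentiation let us pass to the limit $\varepsilon\to 0$ under the outer integral, yielding $\int_{-h}^0 F^T(\theta)U'(\tau-\theta)\,d\theta$ for almost every $\tau$. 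The possible exceptional set from $\theta=\tau$ (when $\tau\in[-h,0]$) has measure zero in $\theta$, so it does not affect the value of the integral.

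Combining these three contributions gives
\[
U'(\tau) =\int_{-h}^{0}F^T(\theta)U'(\tau-\theta)\,d\theta -WK(\tau)
\]
as weak derivatives, which is exactly~\eqref{eqAlgNegDer}. The main delicate point is purely technical: carefully justifying the differentiation under the integral near $\tau=0$, where $U'$ may have a jump. This is handled by noting that the subset of $\theta\in[-h,0]$ for which $\tau-\theta$ coincides with the singular point~$0$ is a null set, so the integrand is well-defined a.e.\ and the dominated convergence argument goes through with $|U'|$ and $\|F^T\|$ as bounds on a neighbourhood of $\tau$.
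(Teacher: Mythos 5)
Your proposal is correct and follows the route the paper intends: the corollary is obtained by differentiating the algebraic identity~\eqref{eqAlgNeg} of \Cref{thmGDP} term by term, with the convolution term differentiated under the integral and the integral term yielding $-WK(\tau)$. Your extra care about the null set where $\tau-\theta$ hits the singular point of $U'$ is a legitimate technical refinement of the same argument, not a different approach.
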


The next result can be easily derived from formula~\eqref{eqAlg}, taking into account that for negative $\tau$, the argument of matrix $U$ in the formula is always negative.

\begin{corollary} \label{thmU''}
If $\tau<0$,
\[ U''(\tau)=\int_{-h}^0 U''(\tau+\theta)F(\theta)d\theta -K'^T(-\tau)W. \]
\end{corollary}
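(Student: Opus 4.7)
The plan is to differentiate the algebraic identity~\eqref{eqAlg} from \Cref{thmGDP} twice with respect to $\tau$, exploiting the fact that for $\tau<0$ every argument of $U$ encountered stays in the negative regime where \Cref{defLyap} guarantees existence and boundedness of $U'$ and $U''$.

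\textbf{Step 1: Set up differentiation of the boundary term.} Fix $\tau<0$, so $-\tau>0$. Rewrite the last term of~\eqref{eqAlg} as $-\int_0^{-\tau}(K(s)-K_0)^T ds\, W$ and apply the fundamental theorem of calculus: since $K$ is continuous on $[0,\infty)$ by \Cref{thmKprime}, this term is differentiable in $\tau$ with derivative $(K(-\tau)-K_0)^T W$.

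\textbf{Step 2: First differentiation.} For $\theta\in[-h,0]$ and $\tau<0$ the argument $\tau+\theta$ lies in $[\tau-h,\tau]\subset(-\infty,0)$, so $U$ is absolutely continuous there and we may differentiate under the integral sign in the right-hand side of~\eqref{eqAlg}. Combined with Step~1, this gives
\begin{equation*}
U'(\tau)=\int_{-h}^{0}U'(\tau+\theta)F(\theta)d\theta +(K(-\tau)-K_0)^T W, \quad \tau<0.
\end{equation*}

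\textbf{Step 3: Second differentiation.} Repeat the argument: $U'$ is itself absolutely continuous with bounded weak derivative on $[\tau-h,\tau]$ by \Cref{defLyap}, so differentiation under the integral sign produces $\int_{-h}^{0}U''(\tau+\theta)F(\theta)d\theta$. For the boundary term, $K_0$ is constant and by \Cref{thmKprime} the map $\tau\mapsto K(-\tau)$ has weak derivative $-K'(-\tau)$ on $(-\infty,0)$; hence
\begin{equation*}
\frac{d}{d\tau}(K(-\tau)-K_0)^T W = -K'^T(-\tau)W.
\end{equation*}
Adding the two contributions yields exactly the claimed formula.

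\textbf{Anticipated obstacle.} There is no substantial obstacle: the statement is truly a corollary, and the only care needed is to check that the segment $[\tau-h,\tau]$ over which the argument of $U(\tau+\theta)$ varies does not contain $0$ as an interior point, which is automatic from $\tau<0$. The identity~\eqref{eqAlg} was designed precisely so that the auxiliary term becomes an ordinary integral of $K(s)-K_0$ whose derivatives are controlled by \Cref{thmKprime}.
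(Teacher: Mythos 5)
Your proof is correct and follows exactly the route the paper intends: the paper's own justification is the one-line remark that the corollary ``can be easily derived from formula~\eqref{eqAlg}, taking into account that for negative $\tau$, the argument of matrix $U$ in the formula is always negative,'' which is precisely your two differentiations of~\eqref{eqAlg} together with the observation that $[\tau-h,\tau]$ stays in the negative regime. Your Steps 1--3 simply make explicit the chain-rule computations on the boundary term that the paper leaves to the reader.
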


The proof of the next lemma is given in~\Cref{appLMprop2}.

\begin{lemma} \label{thmT1T2}
The Lyapunov matrix satisfies the following relation with the fundamental matrix $K$ for $\tau_1,\tau_2\in[0,h]$
\begin{equation}
U(\tau_2-\tau_1)=\int_{-h}^{0}\int_{-h}^{\xi} U'(-\tau_1-\xi+\theta) F(\theta) d\theta K(\tau_2+\xi) d\xi +\int_{-h}^{0}\left(K(\tau_1+\xi)-K_0\right)^TWK(\tau_2+\xi)d\xi.
\end{equation}
\end{lemma}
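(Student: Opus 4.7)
The plan is to reduce the right-hand side of the claim to $U(\tau_2-\tau_1)$ by combining the algebraic identities for $U$ from \Cref{thmGDP} and \Cref{thmU'} with the renewal equations~\eqref{eqK} and~\eqref{eqKK} for the fundamental matrix $K$. The central idea is to trade the $U'$ factor in the first integral for a factor of $U$ via an integration by parts in the outer variable $\xi$, using $\tfrac{d}{d\xi}U(-\tau_1-\xi+\theta)=-U'(-\tau_1-\xi+\theta)$; this exchanges ``$U'$ against $K$'' for ``$U$ against $K'$'' plus boundary contributions.

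Concretely, I would first interchange the order of integration in the double integral so that $\xi\in[\theta,0]$ for fixed $\theta\in[-h,0]$, then integrate by parts in $\xi$. The boundary value at $\xi=\theta$, after integration in $\theta$, collapses via~\eqref{eqK} to $U(-\tau_1)K(\tau_2)$. The boundary value at $\xi=0$, after integration in $\theta$ and an application of~\eqref{eqAlg} to $\int_{-h}^{0}U(-\tau_1+\theta)F(\theta)\,d\theta$, produces $-U(-\tau_1)K(\tau_2)$ plus a correction term of the form $-\int_{0}^{\tau_1}(K(s)-K_0)^T\,ds\,W\,K(\tau_2)$. The $U(-\tau_1)K(\tau_2)$ pieces cancel, leaving only the correction together with a residual double integral against $K'(\tau_2+\xi)$.

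Next, I would rewrite the second term of the right-hand side: the substitution $s=\tau_1+\xi$, combined with $K(s)=K_0$ on $(-\infty,0)$, turns it into $\int_{0}^{\tau_1}(K(s)-K_0)^T W\,K(\tau_2-\tau_1+s)\,ds$. Adding this to the correction from the previous paragraph produces a telescoping $K(\tau_2-\tau_1+s)-K(\tau_2)$ difference, which via the fundamental theorem of calculus can be matched against the residual $K'$-integral after a change of variables. The resulting expression is then assembled back into the defining equation~\eqref{eqAlg} (or~\eqref{eqDyn} when $\tau_2\geq\tau_1$) applied at $\tau=\tau_2-\tau_1$, identifying the total as $U(\tau_2-\tau_1)$.

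The main obstacle is the careful handling of the jump $K(0)-K_0=I$ guaranteed by \Cref{thmLK0K0I}: the integration by parts requires that $K$ be treated as absolutely continuous, so the crossing $\tau_2+\xi=0$ at $\xi=-\tau_2$ produces an additional boundary contribution that must be tracked separately. A secondary complication is the sign-of-$\tau_2-\tau_1$ case distinction when finally invoking~\eqref{eqAlg}, since its correction integral $\int_{-\tau}^{0}(K(s)-K_0)^T\,ds\,W$ is nontrivial only when $\tau=\tau_2-\tau_1<0$.
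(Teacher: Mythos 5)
Your overall strategy -- integrate by parts in $\xi$ to trade $U'$ against $K$ for $U$ against $K'$, track the jump of $K$ at zero, and close the loop with \eqref{eqAlg} -- is viable, and its first half checks out in detail: the boundary term at $\xi=\theta$ does collapse to $U(-\tau_1)K(\tau_2)$ via \eqref{eqK}, the boundary term at $\xi=0$ does produce $-U(-\tau_1)K(\tau_2)-\int_0^{\tau_1}(K(s)-K_0)^T ds\,WK(\tau_2)$ via \eqref{eqAlg}, the rewriting of the second summand as $\int_0^{\tau_1}(K(s)-K_0)^TWK(\tau_2-\tau_1+s)\,ds$ is correct, and you are right that the crossing of $\tau_2+\xi=0$ contributes an extra piece (it equals $\int_{-h}^{-\tau_2}U(\tau_2-\tau_1+\theta)F(\theta)\,d\theta$, which is exactly part of what \eqref{eqAlg} at $\tau=\tau_2-\tau_1$ requires). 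This is organized differently from the paper, which instead expands the single integral $\int_0^{\tau_2}U'(\tau_2-\tau_1-\xi)K(\xi)\,d\xi$ in two ways, using \eqref{eqK} on one copy and the differentiated form of \eqref{eqAlg} on the other, and then applies the fundamental theorem of calculus.

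The gap is in your endgame. After your reductions, what must still be shown is, with $\alpha=\tau_2-\tau_1$,
\begin{equation}
\int_{-\tau_2}^{0}\int_{-h}^{\xi} U(-\tau_1-\xi+\theta)F(\theta)\,d\theta\, K'(\tau_2+\xi)\,d\xi
\;=\;\int_{-\tau_2}^{0} U(\alpha+\theta)F(\theta)\,d\theta
\;+\;\int_{0}^{\tau_1}\!\!\int_{0}^{u}\bigl(K(s)-K_0\bigr)^{T}ds\,W\,K'(\alpha+u)\,du .
\end{equation}
This is not a change of variables plus the fundamental theorem of calculus: the left side is a genuine double integral coupling $U$, $F$ and $K'$, while the telescoped term you propose to match it against involves only $K$ and $W$, and the missing $\int_{-\tau_2}^{0}U(\alpha+\theta)F(\theta)\,d\theta$ has to be manufactured as well. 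Closing this requires applying \eqref{eqAlg} once more \emph{under} the integral sign (to replace $\int_{-h}^{\xi}U(-\tau_1-\xi+\theta)F(\theta)\,d\theta$ by $U(-\tau_1-\xi)$ plus corrections) and then a second integration by parts in $v=\tau_2+\xi$, at which point one lands precisely on the integral $\int_0^{\tau_2}U'(\alpha-v)K(v)\,dv$ that the paper's proof is built around. In other words, the step you describe as routine bookkeeping is where the substance of the proof actually lives; as written, the argument is a correct opening with the decisive identity asserted rather than proved.
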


Consider functional~\eqref{eqv0Q}. It is important to notice that formula~\eqref{eqQ}, in contrast to formula~\eqref{eqQIni}, makes sense even if the system is not exponentially stable, and only the existence of the Lyapunov matrix is needed. Thus, we can use formula~\eqref{eqQ} as a new definition of function $Q$. If we substitute this function into~\eqref{eqv0Q}, we can show that the functional satisfies equality~\eqref{eqDv0} independently of the stability of system~\eqref{eqSys}. The proof can be found in~\Cref{appDiffV0}.

\begin{theorem} \label{thmDv0}
Functional~\eqref{eqv0Q} with $Q$ defined by formula~\eqref{eqQ} satisfies equality~\eqref{eqDv0} along the trajectories of system~\eqref{eqSys}.
\end{theorem}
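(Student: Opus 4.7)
The plan is to establish the prescribed-derivative identity by direct computation along trajectories, reducing everything to an algebraic identity on $Q$ that is then decoded via the Lyapunov-matrix properties of \Cref{secLMGeneralCase}. Write
\[
v_0(x_t)=\int_{-h}^0\int_{-h}^0 x^T(t+\xi_1)\,Q(\xi_1,\xi_2)\,x(t+\xi_2)\,d\xi_1 d\xi_2.
\]
For $t$ large enough that $s\mapsto x(t+s)$ is absolutely continuous on $[-h,0]$ (the general case follows by approximation in the initial function), the identity $\partial_t x(t+\xi_i)=\partial_{\xi_i}x(t+\xi_i)$ lets one move the $t$-derivative under both integrals and then integrate by parts in $\xi_1$ and in $\xi_2$. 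This produces boundary contributions at $\xi_i\in\{-h,0\}$ together with the bulk piece $-\int\!\!\int x^T(t+\xi_1)\bigl(\partial_{\xi_1}+\partial_{\xi_2}\bigr)Q\,x(t+\xi_2)\,d\xi_1 d\xi_2$.

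The boundary at $\xi_2=-h$ vanishes trivially because $Q(\xi_1,-h)=0$ straight from \eqref{eqQ} (both inner integrals have empty range). The boundary at $\xi_1=-h$ is paired with $x(t-h)$ and must also cancel; one shows this by inserting \Cref{thmU''} into the $U''$-integrand of $Q(-h,\xi_2)$ and collapsing the resulting iterated $F$-convolutions via the dynamic property \eqref{eqAlg}. The remaining boundary terms at $\xi_i=0$ are $x^T(t)Q(0,\xi_2)x(t+\xi_2)$ and $x^T(t+\xi_1)Q(\xi_1,0)x(t)$, which I keep aside for later combination with the bulk.

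For the bulk, compute $\partial_{\xi_i}Q$ directly from \eqref{eqQ}. One picks up two families of contributions: boundary-in-$\theta$ terms at the variable limits $\theta_i=\xi_i$, carrying $F^T(\xi_1)$ or $F(\xi_2)$ paired with $U''$, and interior terms with $U'''$. A further integration by parts in $\theta$ converts the $U'''$ pieces to $U''$ plus additional $\theta$-boundary terms. Every remaining $U''$-argument is negative, so \Cref{thmU''} applies and rewrites each $U''$ as an $F$-convolution of $U''$ minus $K'^T W$; the $F$-convolution pieces cancel in pairs, and the residual $K'^T W$ pieces are reshaped via the renewal equation \eqref{eqWeakDerEq} satisfied by $K'$.

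The closing step is to match the residue with $-x^T(t)Wx(t)$. Using \eqref{eqSys} one writes
\[
-x^T(t)Wx(t)=-\int_{-h}^0\int_{-h}^0 x^T(t+\xi_1)\,F^T(\xi_1)\,W\,F(\xi_2)\,x(t+\xi_2)\,d\xi_1 d\xi_2,
\]
and expands the $x(t)$ factors in the $\xi_i=0$ boundary terms similarly. \Cref{thmT1T2} then provides the algebraic identification showing that the accumulated kernel from $Q(0,\xi_2)$, $Q(\xi_1,0)$ and the $K'^T W$-pieces coming out of the bulk equals $F^T(\xi_1)WF(\xi_2)$ on $[-h,0]^2$. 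The main obstacle is not conceptual but combinatorial: roughly a dozen term-families must be carried simultaneously through two integrations by parts in $\xi$, one in $\theta$, and two applications of the dynamic identity for $U$, and one must make sure that none is double-counted or dropped.
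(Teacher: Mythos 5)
Your overall strategy---differentiate the functional, peel off boundary terms at $\xi_i\in\{-h,0\}$, show the $-h$ boundaries vanish, and fold the $\xi_i=0$ boundaries into the bulk until everything collapses to the kernel $-F^T(\xi_1)WF(\xi_2)$---is the paper's strategy. But two of the concrete steps you describe would not go through as written. The cancellation of the $\xi_1=-h$ boundary requires the \emph{left}-convolution identity $U''(\tau)=\int_{-h}^0F^T(\theta_1)\,U''(\tau-\theta_1)\,d\theta_1$ for $\tau\leqslant-h$, which is obtained by differentiating \eqref{eqAlgNeg} twice (the $WK'$ term drops because $K$ is constant for negative argument); the right-convolution identities \Cref{thmU''} and \eqref{eqAlg} that you invoke put $F$ on the wrong side and cannot produce the $F^T(\theta_1)(\cdot)$ structure that must be matched against the double integral in $Q(-h,\xi_2)$. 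Likewise, \Cref{thmT1T2} plays no role in this proof---it is the $U$--$K$ relation used to evaluate the functional on the special initial functions in \Cref{thmCIZ}; the correct closing tools are \Cref{thmU''} together with the renewal equation \eqref{eqWeakDerEq} for $K'$ (which you do also cite), yielding $\bigl(-K'^T(-\xi_1)+\int_{\xi_1}^0F^T(\theta_1)K'^T(\theta_1-\xi_1)\,d\theta_1\bigr)W=-F^T(\xi_1)W$.

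The more substantive miss is in the bulk term. A third derivative of $U$ is not available: \Cref{defLyap} guarantees only bounded first and second weak derivatives, so your ``interior terms with $U'''$'' are not well-defined objects, and although the subsequent integration by parts in $\theta$ could be made rigorous distributionally, it is unnecessary. The kernel \eqref{eqQ} depends on $(\xi_1,\xi_2)$ only through the difference $\xi_1-\xi_2$ inside the argument of $U''$ and through the limits of integration, so the operator $\partial_{\xi_1}+\partial_{\xi_2}$ annihilates the integrands and only the Leibnitz boundary-of-integration contributions survive---three terms, each already a $U''$ paired with $F(\xi_2)$ or $F^T(\xi_1)$. This is precisely why the paper changes variables to $s_i=t+\xi_i$ and differentiates $Q(s_1-t,s_2-t)$ with respect to $t$: no derivative ever lands on the solution $x$ (so your smoothness-plus-approximation detour is not needed either), and no $U'''$ ever appears. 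With that observation your bookkeeping shrinks from ``roughly a dozen term-families'' to a handful, and the rest of your outline closes as in the paper.
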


\section{Key Results} \label{secKeyResults}

In this section we introduce the new functional $v_1$ and the corresponding bilinear functional, and present fundamental stability/instability results.

Consider the bilinear functional
\begin{equation} \label{eqZ}
z(\vph,\psi) =\int_{-h}^{0}\int_{-h}^{0}\vph^T(\xi_1)Q(\xi_1,\xi_2)\psi(\xi_2)d\xi_2 d\xi_1 + \int_{-h}^{0}\vph^T(\xi)W\psi(\xi)d\xi,\;\;\vph,\psi\in\PC_h,
\end{equation}
with corresponding quadratic functional
\begin{equation} \label{eqv1}
v_1(\vph)=z(\vph,\vph) =v_0(\vph)+\int_{-h}^{0} \vph^T(\theta)W\vph(\theta)d\theta, \;\;\vph\in\PC_h.
\end{equation}

\begin{theorem} \label{thmV1prop}
Functional $v_1$ is such that the following hold:
\begin{enumerate}[i)]
\item Its derivative along the trajectories of~\eqref{eqSys} is equal to
\begin{equation} \label{eqDv1}
\dfrac{d}{dt}v_1(x_t(\vph))=-x^T(t-h,\vph)Wx(t-h,\vph).
\end{equation}
\item It is continuous in the sense that for any $\vph\in\PC_h$ and any $\vep>0$ there exists $\delta>0$ such that
\[ \psi\in\PC_h,\;\int_{-h}^0 \|\vph(\theta)-\psi(\theta)\|\,d\theta <\delta \;\;\Longrightarrow \;\; |v_1(\vph)-v_1(\psi)|<\vep.\]
\item If system~\eqref{eqSys} is exponentially stable, then~\eqref{eqv1} admits a lower bound
\begin{equation} \label{eqLB}
v_1(\vph)\>\alpha\|\vph\|_\NH^2,\quad\alpha>0.
\end{equation}
\end{enumerate}
\end{theorem}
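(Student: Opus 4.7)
The plan is to handle the three items separately, since they are essentially independent.

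For part (i), my plan is a direct calculation using the split $v_1(\vph)=v_0(\vph)+\int_{-h}^{0}\vph^T(\theta)W\vph(\theta)\,d\theta$ from~\eqref{eqv1}. Theorem~\ref{thmDv0} immediately gives $\frac{d}{dt}v_0(x_t(\vph))=-x^T(t,\vph)Wx(t,\vph)$. For the extra term, substitute $s=t+\theta$ so that $\int_{-h}^{0}x^T(t+\theta,\vph)Wx(t+\theta,\vph)d\theta=\int_{t-h}^{t}x^T(s,\vph)Wx(s,\vph)\,ds$, and differentiate using Lebesgue's theorem (recalling that $x(\cdot,\vph)$ is bounded and piecewise continuous) to obtain $x^T(t,\vph)Wx(t,\vph)-x^T(t-h,\vph)Wx(t-h,\vph)$. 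Adding the two contributions, the $x^T(t,\vph)Wx(t,\vph)$ terms cancel and \eqref{eqDv1} follows.

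For part (ii), I would exploit the bilinear structure. Writing $\eta=\vph-\psi$ and expanding $v_1(\vph)-v_1(\psi)=z(\vph,\vph)-z(\psi,\psi)=z(\eta,\vph)+z(\psi,\eta)$, each piece is linear in $\eta$ in one of its slots. The key ingredient is that the kernel $Q$ of~\eqref{eqQ} is essentially bounded on $[-h,0]^2$: by \Cref{defLyap}, $U''$ exists and is bounded on every segment that does not contain $0$ as an interior point, so $U''$ is bounded a.e.\ on $[-2h,2h]$, and boundedness of $F$ (being of bounded variation) together with the compact integration ranges in~\eqref{eqQ} yield $\sup_{[-h,0]^2}\|Q(\xi_1,\xi_2)\|<\infty$. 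With this, $|z(\eta,\vph)|$ and $|z(\psi,\eta)|$ can be estimated by constants depending on $\|\vph\|_{\infty}$ (and, for the second term, on a local $\PC_h$-bound on $\psi$) multiplied by $\int_{-h}^{0}\|\eta(\theta)\|\,d\theta$, from which the required $\delta$ is produced.

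For part (iii), exponential stability is precisely what is needed to recover the integral representation~\eqref{eqv0int}: $v_0(\vph)=\int_0^{\infty}x^T(t,\vph)Wx(t,\vph)\,dt\geqslant 0$ since $W>0$. Substituting in~\eqref{eqv1} then gives $v_1(\vph)\geqslant\int_{-h}^{0}\vph^T(\theta)W\vph(\theta)\,d\theta\geqslant\lambda_{\min}(W)\|\vph\|_{\NH}^2$, so $\alpha=\lambda_{\min}(W)>0$ works.

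The main obstacle is part (ii): the boundedness of $Q$ via \Cref{defLyap} is the nontrivial step, and one has to be careful with the quadratic $W$-term when bounding $z(\psi,\eta)$, since an $L^1$ closeness assumption alone does not immediately dominate a quadratic expression. The resolution uses that both $\vph$ and $\psi$ are bounded functions in $\PC_h$, so $\|\eta\|_\infty$ is controlled locally and can be absorbed into the constant produced by the $Q$- and $W$-bounds. Parts (i) and (iii) are essentially one-line computations using earlier results.
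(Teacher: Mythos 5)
Parts (i) and (iii) of your proposal are sound. Part (i) is exactly the paper's argument. For part (iii) you take a slightly more direct route than the paper: you recover $v_0(\vph)=\int_0^\infty x^T(t,\vph)Wx(t,\vph)\,dt\>0$ from \Cref{thmDv0} by integration (using exponential stability and the boundedness of $Q$ to get $v_0(x_{\mathcal T})\to0$) and then bound the remaining term of~\eqref{eqv1} below by $\lambda_{\min}(W)\|\vph\|_\NH^2$; the paper instead forms $\bar v=v_1-\alpha\|\cdot\|_\NH^2$, checks that its derivative along trajectories is $-\bar w$ with $\bar w\>0$ for $\alpha\<\lambda_{\min}(W)$, and integrates. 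The two arguments are equivalent and produce the same constant $\alpha=\lambda_{\min}(W)$.

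Part (ii) is where the genuine gap lies. Your diagnosis of the difficulty is correct and more honest than the paper's own proof (which is essentially one line invoking only the boundedness of $Q$): the $Q$-double integrals in $z(\eta,\vph)+z(\psi,\eta)$ are indeed controlled by $\|Q\|_\infty$ times products of $L^1$-norms, and $\int\vph^TW\eta$ is controlled by $\|W\|\,\|\vph\|_\infty\int\|\eta\|$ since $\vph$ is fixed; the obstruction is the remaining quadratic piece, equivalently $\int\eta^TW\eta$. But your proposed resolution --- that $\|\eta\|_\infty$ is ``controlled locally'' because $\vph,\psi\in\PC_h$ are bounded --- does not close the gap: each individual $\psi$ is bounded, but $\sup_\theta\|\psi(\theta)\|$ is not uniformly bounded over the set $\{\psi\in\PC_h:\int_{-h}^0\|\vph-\psi\|\,d\theta<\delta\}$, so the constant multiplying $\int\|\eta\|$ cannot be fixed before $\delta$ is chosen. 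Indeed, for $\vph=0$ and $\psi_k=k^2\chi_{[-k^{-3},0)}v$ ($v$ a unit vector) one has $\int\|\psi_k\|=k^{-1}\to0$ while $\int\psi_k^TW\psi_k\>\lambda_{\min}(W)\,k\to\infty$ and the $Q$-contribution tends to $0$, so $|v_1(\vph)-v_1(\psi_k)|\to\infty$: the implication as literally stated fails. The property is true (and is all that is needed, e.g.\ at the end of the proof of \Cref{thmUnsTh}, where the approximants of \Cref{thmApprox} have a uniform sup-norm bound) only when $\psi$ ranges over a family with a uniform bound on $\|\psi\|_\infty$, in which case $\int\eta^TW\eta\<\|W\|\,(\|\vph\|_\infty+M)\int\|\eta\|$ and your $\delta$--$\vep$ argument goes through. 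You should state and prove (ii) with that additional uniform-boundedness restriction rather than claim the absorption works for arbitrary $\psi\in\PC_h$.
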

\begin{proof}
The derivative of functional $v_0$ is equal to~\eqref{eqDv0}. Applying the change of variable $s=t+\theta$, the derivative of the second term in the r.h.s of~\eqref{eqv1} is
\[ \dfrac{d}{dt}\int_{t-h}^{t}x^T(s,\vph)W x(s,\vph) ds =x^T(t,\vph)Wx(t,\vph) -x^T(t-h,\vph)Wx(t-h,\vph), \]
and i) follows directly.

The continuity of the functional is a consequence of the uniform boundedness of function $Q$ defined by formula~\eqref{eqQ}. Function $Q$ is bounded, as the second derivative of the Lyapunov matrix is (by \Cref{defLyap}).

To prove iii), we define the functional
\[ \bar{v}(\vph) =v_1(\vph) -\alpha\|\vph\|_{\NH}^2 =v_1(\vph) -\alpha\int_{-h}^{0}\|\vph(\theta)\|^2 d\theta. \]
Then
\[ \dfrac{d}{dt}\bar{v}(x_t(\vph))=-\bar{w}(x_t(\vph)) =-x^T(t-h,\vph)W x(t-h,\vph) -\alpha\left(x^T(t,\vph)x(t,\vph) -x^T(t-h,\vph)x(t-h,\vph)\right). \]
If $\alpha\in[0,\lambda_{\min}(W)]$, then $\bar w(x_t(\vph))\>0$. As system~\eqref{eqSys} is exponentially stable, we can represent $\bar{v}$ as
\[ \bar{v}(\vph)=\int_{0}^{\infty}\bar{w}(x_t(\vph))dt\>0, \]
and iii) follows.
\end{proof}

Consider the functions of the form
\begin{equation} \label{eqSpec}
\psi(\theta) =\sum_{i=1}^{r} \left(K(\tau_i+\theta)-K_0\right)\gamma_i,
\end{equation}
where $r$ is a positive integer, 
\[ \tau_i\in(0,h],\; \gamma_i\in\R^n,\;\; i=1,\ldots,r. \]

Now we prove that when valued at functions of the form~\eqref{eqSpec} our functional $v_1$ takes a very simple quadratic form, which is based on a finite number of values of the Lyapunov matrix, and does not contain any integrals, which are hard to compute. The proof is given in \Cref{appCIZ}.

\begin{lemma} \label{thmCIZ}
For any $\tau_1,\tau_2\in(0,h]$, arbitrary vectors $\gamma_1,\gamma_2\in\R^n$ and functions
\begin{equation} \label{eqIC}
\vph_i(\theta)=\left(K(\tau_i+\theta) -K_0\right)\gamma_i, \quad \theta\in[-h,0), \;\; i=1,2,
\end{equation}
the bilinear functional $z$ can be expressed as
\begin{equation} \label{eqZL}
z(\vph_1,\vph_2) =\gamma_1^TL(\tau_1,\tau_2)\gamma_2,
\end{equation}
where
\begin{equation} \label{eqL}
L(\tau_1,\tau_2) =U(0)-U(-\tau_1) -U(\tau_2) +U(\tau_2-\tau_1).
\end{equation}
\end{lemma}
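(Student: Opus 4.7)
The plan is to substitute the special initial functions $\varphi_i(\xi)=(K(\tau_i+\xi)-K_0)\gamma_i$ directly into the bilinear form \eqref{eqZ} and reduce everything to values of the Lyapunov matrix by combining the renewal equations for $K$ with the identity in \Cref{thmT1T2}. Since $z$ is bilinear and each $\varphi_i$ is linear in $\gamma_i$, one immediately gets $z(\varphi_1,\varphi_2)=\gamma_1^T M(\tau_1,\tau_2)\gamma_2$ for a matrix $M(\tau_1,\tau_2)$, and the task is to show that $M=L$. A convenient reformulation of the target, obtained by differentiating twice and integrating back (using $L(0,\tau_2)=L(\tau_1,0)=0$), is
\[
L(\tau_1,\tau_2) = -\int_0^{\tau_1}\int_0^{\tau_2} U''(\beta-\alpha)\,d\beta\,d\alpha,
\]
and this is the structural form that should emerge from the substitution.

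Next, I would substitute the definition \eqref{eqQ} of $Q$ into the double integral appearing in \eqref{eqZ}. This produces two main pieces: a triple integral with kernel $F(\theta)U''$ sandwiched between $\varphi_1^T(\xi_1)$ and $\varphi_2(\xi_2)$, and a quadruple integral with kernel $F^T(\theta_1)U''F(\theta_2)$. For each piece I would swap the order of integration via Fubini so that the $F$ integrations are pulled against the $(K(\tau_i+\xi_i)-K_0)$ factors; using equations \eqref{eqK} and \eqref{eqKK} — applied in the form $\int_{-h}^{\xi}(K(\tau_i+\xi+\cdot))F(\cdot)=K(\tau_i+\cdot)$ shifted appropriately — these compound $K$-$F$ convolutions collapse into single $K$-values. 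The point is that $\varphi_i$ was designed precisely so that the $F$-weights in $Q$ are cancelled by the $F$-structure hidden in $K$.

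Having reduced $M(\tau_1,\tau_2)$ to an expression of the form
\[
\int_{-h}^0\!\!\int_{-h}^{\xi}\!\!U'(-\tau_1-\xi+\theta)F(\theta)\,d\theta\; K(\tau_2+\xi)\,d\xi \;+\;\int_{-h}^0(K(\tau_1+\xi)-K_0)^T W K(\tau_2+\xi)\,d\xi
\]
plus boundary/constant contributions coming from the pieces where the kernels were reduced, I would invoke \Cref{thmT1T2} to identify the first two integrals as $U(\tau_2-\tau_1)$. The residual boundary terms — which arise because $K(0)-K_0=I$ by \Cref{thmLK0K0I}, producing jumps of the integrands at $\xi=-\tau_i$ of size $\gamma_i$ — then combine, via the symmetry property \eqref{eqSym} and the algebraic identities \eqref{eqAlgNeg}-\eqref{eqAlg}, into exactly $U(0)-U(-\tau_1)-U(\tau_2)$. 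Adding this to $U(\tau_2-\tau_1)$ gives the desired $L(\tau_1,\tau_2)$.

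The hard part will be the bookkeeping. The manipulation involves several nested integrals whose domains are triangular in $(\xi,\theta)$; each change of integration order and each use of \eqref{eqKK} produces contributions that look structurally similar but differ subtly in their arguments. The most delicate issue is handling the discontinuity of $K$ at the origin: when $\xi$ crosses $-\tau_i$, the factor $K(\tau_i+\xi)-K_0$ has a Dirac-like contribution in its derivative, and these are precisely the boundary contributions that must be tracked with care. I expect the cleanest implementation is to split every $\xi$-integral at $-\tau_i$, treat the piece $\xi<-\tau_i$ (where $\varphi_i\equiv 0$) trivially, and then carry out the remaining manipulations on $[-\tau_i,0)$.
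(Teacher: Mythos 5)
Your plan assembles the right ingredients (substitution into \eqref{eqZ}, the renewal identities \eqref{eqK}, \eqref{eqKK}, and \Cref{thmT1T2}), but the decisive step is asserted rather than derived, and the mechanism you propose for it is not the one that actually operates. You claim that $U(\tau_2-\tau_1)$ will come from \Cref{thmT1T2} applied to a ``main'' term, and that the remaining three terms $U(0)-U(-\tau_1)-U(\tau_2)$ will emerge from ``Dirac-like'' boundary contributions at $\xi=-\tau_i$, to be combined ``via \eqref{eqSym} and \eqref{eqAlgNeg}--\eqref{eqAlg}.'' There are no distributional contributions anywhere in this computation: $K$ itself (not its derivative) appears in the integrands, its jump at zero is harmless under the Lebesgue integral, and the three extra terms do not arise as boundary residues. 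The organizational idea you are missing is that, since $K(\theta)=K_0$ for $\theta<0$, one can write $K_0=K(0+\xi)$ for $\xi\in[-h,0)$ and hence expand the product $(K(\tau_1+\xi_1)-K_0)^T(\cdots)(K(\tau_2+\xi_2)-K_0)$ into the second difference
\begin{equation}
z(\vph_1,\vph_2)=\gamma_1^T\bigl(T(\tau_1,\tau_2)-T(\tau_1,0)-T(0,\tau_2)+T(0,0)\bigr)\gamma_2,
\end{equation}
of a \emph{single} two--parameter function $T$. One then shows, using \Cref{thmT1T2} once, that $T(\tau_1,\tau_2)=U(\tau_2-\tau_1)+K_0^TWS+(h-\tau_2)K_0^TWK_0$; the constant and $\tau_2$-affine terms cancel identically in the alternating sum, and the four evaluations of $U(\tau_2-\tau_1)$ at $(\tau_1,\tau_2)$, $(\tau_1,0)$, $(0,\tau_2)$, $(0,0)$ deliver precisely the four terms of $L$. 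Without this reduction, the ``bookkeeping'' you defer is not merely tedious but unguided, and there is no indication of how \eqref{eqSym} would produce the claimed combination.

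A secondary warning sign is your ``convenient reformulation'' $L(\tau_1,\tau_2)=-\int_0^{\tau_1}\int_0^{\tau_2}U''(\beta-\alpha)\,d\beta\,d\alpha$. By \Cref{defLyap} the Lyapunov matrix is only guaranteed to have bounded second weak derivatives on segments not containing zero as an interior point, and $U'$ generically jumps at $\tau=0$ (compare the symmetry property \eqref{eqSym}); the argument $\beta-\alpha$ crosses zero inside your integration domain, so this identity fails unless the jump of $U'$ at the origin is accounted for separately. This is consistent with the diagnosis above: your picture of where the terms $U(0)$, $-U(-\tau_1)$, $-U(\tau_2)$ come from is not correct, and the proof as outlined does not close.
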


\begin{corollary} \label{thmV1onSpec}
For any function $\psi$ of the form~\eqref{eqSpec}
\begin{equation} \label{eqVL}
v_1(\psi) =\gamma^T \Kr(\tau_1,\ldots,\tau_r)\gamma,
\end{equation}
where the block-matrix
\begin{equation} \label{eqMK}
\Kr(\tau_1,\ldots,\tau_r) =\left[L(\tau_i,\tau_j)\right]_{i,j=1}^{r},
\end{equation}
and vector-column
\[ \gamma=\left[\gamma_1^T,\ldots,\gamma_r^T\right]^T. \]
\end{corollary}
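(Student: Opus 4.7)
The plan is to reduce the computation of $v_1(\psi)$ on a function of the form~\eqref{eqSpec} to a finite sum of evaluations of the bilinear functional $z$ on the basic building blocks $\vph_i(\theta)=(K(\tau_i+\theta)-K_0)\gamma_i$, for which \Cref{thmCIZ} already provides a clean formula in terms of the Lyapunov matrix. The key observation is that $\psi=\sum_{i=1}^r \vph_i$ by construction, and that the bilinear functional $z$ defined by~\eqref{eqZ} is genuinely bilinear in $(\vph,\psi)$: inspection of the two integrals defining $z$ shows that each is linear in $\vph$ (fixing $\psi$) and linear in $\psi$ (fixing $\vph$), since neither $Q$ nor $W$ depends on the arguments $\vph,\psi$.

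First I would write $v_1(\psi)=z(\psi,\psi)$, using the definition~\eqref{eqv1}. Then, expanding $\psi=\sum_{i=1}^{r}\vph_i$ and using the bilinearity of $z$ twice, I obtain
\[
v_1(\psi)=z\!\left(\sum_{i=1}^{r}\vph_i,\sum_{j=1}^{r}\vph_j\right)=\sum_{i=1}^{r}\sum_{j=1}^{r}z(\vph_i,\vph_j).
\]
At this point I would invoke \Cref{thmCIZ}: for each pair of indices $(i,j)$, with $\tau_i,\tau_j\in(0,h]$ and $\gamma_i,\gamma_j\in\R^n$, the lemma gives
\[
z(\vph_i,\vph_j)=\gamma_i^T L(\tau_i,\tau_j)\gamma_j,
\]
with $L$ defined by~\eqref{eqL}.

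Finally, I would recognize that the double sum $\sum_{i,j}\gamma_i^T L(\tau_i,\tau_j)\gamma_j$ is exactly the quadratic form associated with the block matrix $\Kr(\tau_1,\ldots,\tau_r)=[L(\tau_i,\tau_j)]_{i,j=1}^{r}$ acting on the stacked vector $\gamma=[\gamma_1^T,\ldots,\gamma_r^T]^T$, yielding~\eqref{eqVL}. There is no real obstacle in this argument, since all the analytic work has been absorbed into \Cref{thmCIZ}; the only thing that must be made explicit is the bilinearity of $z$, which is immediate from the definition~\eqref{eqZ} but deserves a one-line mention so that the reassembly of the double sum into a block quadratic form is fully justified.
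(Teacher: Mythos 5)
Your proposal is correct and is exactly the argument the paper intends: the corollary is stated without a separate proof precisely because it follows from \Cref{thmCIZ} by the bilinearity of $z$ applied to the decomposition $\psi=\sum_{i=1}^{r}\vph_i$, followed by reassembling the double sum $\sum_{i,j}\gamma_i^{T}L(\tau_i,\tau_j)\gamma_j$ into the block quadratic form $\gamma^{T}\Kr\gamma$. No gaps.
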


\begin{lemma}
The block matrix $\Kr(\tau_1,\ldots,\tau_r)$ is symmetric, i.\:e., the following equality is satisfied
\begin{equation} \label{eqSymL}
L(\tau_i,\tau_j)=L^T(\tau_j,\tau_i).
\end{equation} 
\end{lemma}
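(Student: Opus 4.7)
The plan is to verify the identity by a direct algebraic expansion. Starting from the definition~\eqref{eqL}, transposing gives
\[
L^T(\tau_j,\tau_i) = U^T(0) - U^T(-\tau_j) - U^T(\tau_i) + U^T(\tau_i - \tau_j),
\]
so the task reduces to expressing each transposed value of $U$ in terms of untransposed values of $U$.

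The second step is to apply the symmetry property~\eqref{eqSym}, rewritten as $U^T(\tau) = U(-\tau) - P + \tau K_0^TWK_0$, to each of these four terms. This produces, for every term, an additive $P$-correction and an additive $K_0^TWK_0$-correction whose scalar coefficient equals the argument fed to $U^T$.

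The third (and essentially only interesting) step is to observe that with the signs $(+,-,-,+)$ multiplying the four terms of $L^T(\tau_j,\tau_i)$, the four $-P$ contributions combine as $-P+P+P-P=0$, while the four scalar coefficients in front of $K_0^TWK_0$ combine as $0+\tau_j-\tau_i+(\tau_i-\tau_j)=0$. Both corrections therefore cancel automatically (the skew-symmetry of $P$ is not even needed), and the surviving $U$-terms read $U(0)-U(\tau_j)-U(-\tau_i)+U(\tau_j-\tau_i)$, which is precisely $L(\tau_i,\tau_j)$ by~\eqref{eqL}.

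I do not anticipate any obstacle: the statement is essentially a bookkeeping check that the affine corrections appearing in~\eqref{eqSym} cancel out in this particular four-term combination. An alternative route would be to argue at the bilinear level, proving $z(\vph_1,\vph_2)=z(\vph_2,\vph_1)$ on the test functions in~\eqref{eqIC} via the symmetry of $W$ together with an identity $Q(\xi_1,\xi_2)=Q^T(\xi_2,\xi_1)$; but deriving that transpose relation from~\eqref{eqQ} is strictly more elaborate than the direct algebraic check above, so the plan is to use the direct check.
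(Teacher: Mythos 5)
Your proof is correct and follows essentially the same route as the paper's: both apply the symmetry property~\eqref{eqSym} to the four terms of~\eqref{eqL} and observe that the additive $P$- and $K_0^TWK_0$-corrections cancel in the combination with signs $(+,-,-,+)$. The only cosmetic difference is the direction of the computation (you expand $L^T(\tau_j,\tau_i)$ into untransposed $U$-values, the paper expands $L(\tau_j,\tau_i)$ into transposed ones), and your observation that the skew-symmetry of $P$ is not needed is accurate.
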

\begin{proof}
By definition~\eqref{eqL}, we have
\[ L(\tau_j,\tau_i)=U(0)-U(-\tau_j) -U(\tau_i)+U(\tau_i-\tau_j), \]
with the help of the symmetric property~\eqref{eqSym}, we can write
\[ L(\tau_j,\tau_i)=U^T(0)-U^T(\tau_j) -U^T(-\tau_i)+U^T(\tau_j-\tau_i), \]
which implies~\eqref{eqSymL}.
\end{proof}

Notice that in the case of equidistant points 
\begin{equation} \label{eqEquidTau}
\tau_i=i\dfrac{h}{r}, \quad i=1,\ldots,r,
\end{equation}
the matrix $\Kr$ is of the form
\begin{equation} \label{eqKrE}
\Kr\left(\dfrac{h}{r},2\dfrac{h}{r},3\dfrac{h}{r},\ldots,h\right) =\left[L\left(i\dfrac{h}{r},j\dfrac{h}{r}\right) \right]_{i,j=1}^{r}.
\end{equation}

Now we give an instability theorem. A similar result, which is based on an interesting idea~\cite{MedvZh2013}, was previously proven in the context of differential systems with pointwise~\cite{Egorov2014} and distributed delays~\cite{EgorovCuvas2017}. It states that if system~\eqref{eqSys} is unstable, then the functional $v_1$ is unbounded from below. The proof is given in~\Cref{appUnsTh}.

\begin{theorem} \label{thmUnsTh}
If system~\eqref{eqSys} is unstable and satisfies \Cref{assEig}, then for every $c>0$ there exists a function of the form~\eqref{eqSpec} with equidistant points~\eqref{eqEquidTau}, such that
\begin{equation} \label{eqUnsTh}
v_1(\psi)\<-c.
\end{equation}
\end{theorem}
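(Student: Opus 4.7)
My plan is to apply the Medvedeva--Zhabko unstable-mode trick to the functional $v_1$ and then to transfer the resulting negative value to the class \eqref{eqSpec} by continuity.

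First, by the instability of \eqref{eqSys} and \Cref{assEig}, I would fix an eigenvalue $s_0\in\Cx$ with $\sigma:=\Re s_0>0$ and a nonzero $\mu_0\in\Cx^n$ satisfying $H(s_0)\mu_0=0$. Set $y(t)=\Re\bigl(e^{s_0 t}\mu_0\bigr)$. A direct substitution using $\bigl(I-\int_{-h}^{0}e^{s_0\theta}F(\theta)d\theta\bigr)\mu_0=0$ shows that $y$ solves \eqref{eqSys} on all of $\R$; choosing $\mu_0$ real whenever $s_0\in\R$, or invoking the linear independence of $e^{s_0 t}$ and $e^{\bar s_0 t}$ when $s_0\notin\R$, ensures $y\not\equiv 0$. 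In particular, the segment $y_0\in\PC_h$ yields the trajectory $x_t(y_0)=y_t$ for every $t\>0$.

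Next, I would extract the sign of $v_1(y_0)$. Let $T_0>0$ satisfy $e^{i(\Im s_0)T_0}=1$; then $y_{T_0}=e^{\sigma T_0}y_0$ pointwise on $[-h,0)$, and the quadratic homogeneity of $v_1$ gives $v_1(y_{T_0})=e^{2\sigma T_0}v_1(y_0)$. Integrating the prescribed-derivative identity \eqref{eqDv1} on $[0,T_0]$ yields
\[ \bigl(e^{2\sigma T_0}-1\bigr)v_1(y_0)=-\int_{0}^{T_0}y^T(t-h)Wy(t-h)\,dt<0, \]
so $v_1(y_0)<0$. Scaling then gives $v_1(\lambda y_0)=\lambda^2 v_1(y_0)\to-\infty$ as $\lambda\to\infty$, and I would pick $\lambda$ with $v_1(\lambda y_0)\<-2c$.

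The final step is to approximate $\lambda y_0$ in the integral norm of \Cref{thmV1prop}(ii) by a function $\psi$ of the form \eqref{eqSpec} with equidistant nodes $\tau_i=ih/r$. Each building block $\theta\mapsto K(\tau_i+\theta)-K_0$ vanishes for $\theta<-\tau_i$, has a jump of magnitude $I$ at $\theta=-\tau_i$ (by \Cref{thmLK0K0I}), and is absolutely continuous on $(-\tau_i,0)$ (by \Cref{thmKprime}). Consequently the sum \eqref{eqSpec} is piecewise absolutely continuous with prescribed jump $\gamma_i$ at the node $-\tau_i$; a triangular cascade---solving for $\gamma_r$, then $\gamma_{r-1}$, and so on down to $\gamma_1$, so as to match the mean of $\lambda y_0$ on successive subintervals of length $h/r$---produces an approximation of $\lambda y_0$ with error of order $1/r$ in $\int_{-h}^{0}\|\cdot\|d\theta$. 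By \Cref{thmV1prop}(ii), taking $r$ large gives $|v_1(\psi)-v_1(\lambda y_0)|<c$, which combined with $v_1(\lambda y_0)\<-2c$ delivers \eqref{eqUnsTh}.

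The principal obstacle is rigorously justifying the density step: that the span of $\{K(ih/r+\cdot)-K_0\}_{i=1}^{r}$ becomes dense in $\PC_h$ (under the integral norm) as $r\to\infty$. The triangular cascade reduces the problem to controlling, on each of the $r$ subintervals, the deviation of each building block from its left-endpoint value, which is $O(h/r)$ by the absolute continuity of $K$ together with the bounded-variation hypothesis on $F$; aggregating these contributions gives the claimed $O(1/r)$ total error. The rest of the argument is then routine bookkeeping.
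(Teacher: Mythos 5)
Your overall strategy coincides with the paper's: produce an unstable exponential mode, show that $v_1$ is negative on its initial segment via the prescribed derivative \eqref{eqDv1} and quadratic homogeneity, scale, and then approximate in the integral norm by functions of the form \eqref{eqSpec} using the continuity of $v_1$ from \Cref{thmV1prop}. However, your very first step hides the hardest part of the argument. You ``fix an eigenvalue $s_0$ with positive real part'' as if instability of \eqref{eqSys} together with \Cref{assEig} immediately supplied one. For retarded differential-delay systems this spectral characterization of instability is classical, but for integral delay equations it is not: a priori, instability of \eqref{eqSys} might occur without any zero of $\det H$ in the closed right half-plane. The paper closes this gap by applying the operator $\frac{d}{dt}+\mathrm{id}$ to \eqref{eqSys} to obtain the auxiliary differential system \eqref{eqDiffSys}, checking that solutions of \eqref{eqSys} become absolutely continuous on $[h,\infty)$ (so that they are genuine solutions of \eqref{eqDiffSys}, whence that system is unstable as well), computing $\widetilde H(s)=(s+1)H(s)$, and only then invoking the classical spectral theory of retarded equations to produce $s_0$ with positive real part. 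Without some such argument your proof does not get off the ground.

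The remaining steps are essentially the paper's. Your negativity computation for $v_1(y_0)$ is identical to \eqref{eqPfUnsTh1}. For the approximation step the paper interpolates $\vph$ at the nodes $-\tau_k$ (rather than matching means on subintervals), which yields an explicit triangular system for the $\gamma_i$; the point you correctly flag as the ``principal obstacle'' --- that the coefficients produced by the cascade do not blow up as $r\to\infty$ --- is handled there by a discrete Gronwall-type inequality (\Cref{thmGB}), giving a bound on $\sum_{i}\|\gamma_i\|$ of the form $\bigl(\bigvee_{-h}^{0}\vph+\|\vph(-h)\|\bigr)e^{Lh}$ uniformly in $r$, where $L$ is the Lipschitz constant of $K$ on $[0,h]$. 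Your sketch asserts the $O(1/r)$ aggregate error but does not supply this uniform bound, which is exactly what is needed, since each $\gamma_k$ feeds into the errors on all later subintervals. Note also that the resulting estimate requires $\vph$ of bounded variation, which is harmless here because the unstable mode is smooth. In short: same route, one genuine gap (existence of a right-half-plane eigenvalue for an unstable IDE) and one acknowledged but unfilled technical step (the Gronwall control of the $\gamma_i$).
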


\section{Main Result: Stability Conditions} \label{secMain}

We are now ready to present the main contribution of this work, necessary and sufficient stability conditions for integral delay system~\eqref{eqSys} in terms of the delay Lyapunov matrix.

\begin{theorem}
Let \Cref{assEig} hold. System~\eqref{eqSys} is exponentially stable if and only if for every natural number $r\>2$
\begin{equation} \label{eqCNS1}
\left[L\left(i\dfrac{h}{r},j\dfrac{h}{r}\right) \right]_{i,j=1}^{r} > 0.
\end{equation}
Moreover, if system~\eqref{eqSys} is unstable, then there exists a natural number $r$, such that
\begin{equation} \label{eqCNS2}
\left[L\left(i\dfrac{h}{r},j\dfrac{h}{r}\right) \right]_{i,j=1}^{r} \ngeqslant 0.
\end{equation}
\end{theorem}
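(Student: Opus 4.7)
The plan is to combine three ingredients already assembled earlier in the paper: the identity $v_1(\psi)=\gamma^T\Kr(\tau_1,\ldots,\tau_r)\gamma$ from \Cref{thmV1onSpec}, which turns the matrix (semi)definiteness into a statement about the functional; the quadratic lower bound $v_1(\vph)\>\alpha\|\vph\|_\NH^2$ from \Cref{thmV1prop}~iii), available under exponential stability; and the instability theorem \Cref{thmUnsTh}, which produces arbitrarily negative values of $v_1$ on equidistant test functions of the form~\eqref{eqSpec} whenever the system is unstable. The two directions of the equivalence are essentially dual, and the ``moreover'' assertion \eqref{eqCNS2} drops out of the sufficiency argument.

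For necessity, fix $r\>2$, the equidistant grid $\tau_i=ih/r$, and an arbitrary column $\gamma=[\gamma_1^T,\ldots,\gamma_r^T]^T\in\R^{nr}$. Form $\psi$ by \eqref{eqSpec}; combining \Cref{thmV1onSpec} with \Cref{thmV1prop}~iii) yields
\[
\gamma^T\Kr(h/r,2h/r,\ldots,h)\gamma =v_1(\psi) \>\alpha\|\psi\|_\NH^2 \>0,
\]
so $\Kr\>0$ is at least semidefinite. To upgrade this to strict positive definiteness I must show that $\gamma\neq 0$ forces $\|\psi\|_\NH>0$, i.e.\ that the matrix-valued functions $\theta\mapsto K(\tau_i+\theta)-K_0$, $i=1,\ldots,r$, are linearly independent on $[-h,0)$. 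The argument I have in mind is a peeling induction from the top: on the leftmost subinterval $[-h,-\tau_{r-1})$ every argument $\tau_i+\theta$ with $i<r$ is negative, so by \eqref{eqK0} those terms collapse and $\psi$ reduces there to $(K(\tau_r+\theta)-K_0)\gamma_r$; if this vanishes a.e.\ then the right-continuity of $K$ at $0$ granted by \Cref{thmKprime} together with the jump identity $K(0)-K_0=I$ from \Cref{thmLK0K0I} forces $\gamma_r=0$ upon letting the argument tend to $0^{+}$. Iterating the same observation on $[-\tau_{k},-\tau_{k-1})$ for $k=r-1,r-2,\ldots,1$ successively removes the remaining $\gamma_k$.

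For sufficiency I would argue by contraposition. Suppose \eqref{eqSys} is not exponentially stable; under \Cref{assEig} it then qualifies as unstable in the sense used by \Cref{thmUnsTh}. Given any $c>0$, that result supplies an $r$, equidistant points $\tau_i=ih/r$ and a test function $\psi$ of the form \eqref{eqSpec} with $v_1(\psi)\<-c$. Invoking \Cref{thmV1onSpec} once more,
\[
\gamma^T\Kr(h/r,2h/r,\ldots,h)\gamma =v_1(\psi) \<-c<0,
\]
so $\Kr$ fails to be positive semidefinite. This simultaneously establishes the ``moreover'' claim \eqref{eqCNS2} and contradicts the standing hypothesis \eqref{eqCNS1}; hence \eqref{eqCNS1} for every $r\>2$ in fact forces exponential stability. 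The principal obstacle, as I see it, is the linear-independence/peeling step used in the necessity half; everything else reduces to a short assembly of previously established results.
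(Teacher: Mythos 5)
Your proposal is correct and follows essentially the same route as the paper: necessity combines \Cref{thmV1onSpec} with the lower bound of \Cref{thmV1prop}, and sufficiency (together with the ``moreover'' claim) follows from \Cref{thmUnsTh} applied to equidistant test functions. The only difference is that your peeling induction spells out in detail the step the paper dispatches in one line (that $\gamma\neq 0$ forces $\|\psi\|_\NH>0$, via $K(0)-K_0=I$ and right continuity of $K$ at zero), and your version of that step is a correct and welcome elaboration.
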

\begin{proof}
\textit{Necessity:} As we assume that system~\eqref{eqSys} is exponentially stable, \Cref{thmV1prop} toguether with \Cref{thmV1onSpec} imply that
\[ v_1(\psi)=\gamma^T \left[L\left(i\dfrac{h}{r},j\dfrac{h}{r}\right) \right]_{i,j=1}^{r}\gamma \> \alpha\|\psi\|_\NH^2, \]
if $\psi$ is defined by~\eqref{eqSpec} with equidistant points~\eqref{eqEquidTau}. It remains to demonstrate that $\|\psi\|_\NH>0$, if $\gamma\neq 0$.

By contradiction we assume that $\gamma\neq 0$ but $\|\psi\|_\NH=0$. Hence, $\psi(\theta)=0$ almost everywhere. But this is impossible, as $K(0)-K_0=I$ and $K(t)$ is right continuous at zero.

\textit{Sufficiency:} Assume that system~\eqref{eqSys} is unstable. We need to prove that there exists a natural number $r$ that satisfies~\eqref{eqCNS2}, which is equivalent to the existence of a vector $\gamma$ that satisfies the inequality: 
\begin{equation} \label{eqSufKr}
\gamma^T\left[L\left(i\dfrac{h}{r},j\dfrac{h}{r}\right) \right]_{i,j=1}^{r}\gamma < 0.
\end{equation}
By \Cref{thmV1onSpec} the left hand side of the inequality is equal to $v_1(\psi)$ for a function $\psi$ of the form~\eqref{eqSpec} with equidistant points~\eqref{eqEquidTau}. By \Cref{thmUnsTh} such function satisfying~\eqref{eqSufKr} exists.
\end{proof}

\begin{remark}
A notable fact is that the array $\Kr$ of Lyapunov matrices is similar to the one obtained for difference equations in continuous time~\cite{Rocha2017}, but differs from the one for differential systems~\cite{EgorovCuvas2017,EgorovMondie2014,Gomez2017}. Of course, the underlying Lyapunov matrices are those corresponding to the class of systems under consideration.
\end{remark}

\section{Illustrative Examples} \label{secExamples}

This section illustrates how we can find the exact stability region in a given space of parameters through the necessary and sufficient stability conditions. We check the stability condition~\eqref{eqCNS1} at each point for increasing values of the parameter $r$, i.\:e.,
\begin{align}
r=2: & \begin{bmatrix}
	L\left(\frac{h}{2},\frac{h}{2}\right) & L\left(\frac{h}{2},h\right) \\
	L\left(h,\frac{h}{2}\right) & L\left(h,h\vphantom{\frac{h}{2}}\right)
\end{bmatrix} > 0, \\
r=3: & \begin{bmatrix}
	L\left(\frac{h}{3},\frac{h}{3}\right) & L\left(\frac{h}{3},\frac{2h}{3}\right) & L\left(\frac{h}{3},h\right) \\
	L\left(\frac{2h}{3},\frac{h}{3}\right) & L\left(\frac{2h}{3},\frac{2h}{3}\right) & L\left(\frac{2h}{3},h\right) \\
	L\left(h,\frac{h}{3}\right) & L\left(h,\frac{2h}{3}\right) & L\left(h,h\vphantom{\frac{h}{2}}\right)
\end{bmatrix} > 0,
\end{align}
and so on.

As the conditions are necessary for any $r$, we obtain outer estimates of the exact stability region. Notice that points that do not fulfill the stability condition for some $r$, can be excluded from tests for greater values of $r$.

In the figures of this section, we test the condition~\eqref{eqCNS1} on a grid of 50 by 50 equidistant points of the space of given parameters. The points that satisfy the positivity condition are depicted by isolated points. The continuous lines are hyper-surfaces corresponding to imaginary axis root crossings computed using the D-subdivisions method.

\begin{example} \label{ex:1}
Consider the problem of finite spectrum assignment of a double integrator with delayed input~\cite{DMelchor2010},
\begin{equation} \label{eqEj1}
\dot{x}(t)=Ax(t)+Bu(t-h),
\end{equation}
where
\begin{equation}
A=\left[\begin{array}{cc}
	0 & 1 \\
	0 & 0
\end{array}\right], \quad B=\left[\begin{array}{c}
	0 \\
	1
\end{array}\right], \quad h=1.
\end{equation}
The control law
\begin{equation} \label{eqEj1u}
u(t)=Cx(t+h)=C\left(e^{Ah}x(t)+\int_{-h}^{0}e^{-A\theta} Bu(t+\theta)d\theta\right),
\end{equation}
where $C=\left[c_1,c_2\right]$, assigns the spectrum of the matrix $A+BC$ to the closed-loop system~{\eqref{eqEj1}-\eqref{eqEj1u}}~\cite{ManitusOlbrot1979}.

The internal dynamics of~\eqref{eqEj1u} is described by the integral equation~\cite{DMelchor2010}
\[ z(t)=\int_{-h}^{0}Ce^{-A\theta}Bz(t+\theta)d\theta = \int_{-h}^{0}(c_2-c_1\theta) z(t+\theta)d\theta. \]

The space of design parameters $(c_1,c_2)$ is depicted in Figure~\ref{FigEj1}. The improvement of the outer estimate of the stability region when $r$ in condition~\eqref{eqCNS1} increases is clear. 
\begin{figure}[!t]
\centering
\subfigure[]{\includegraphics[scale=0.7]{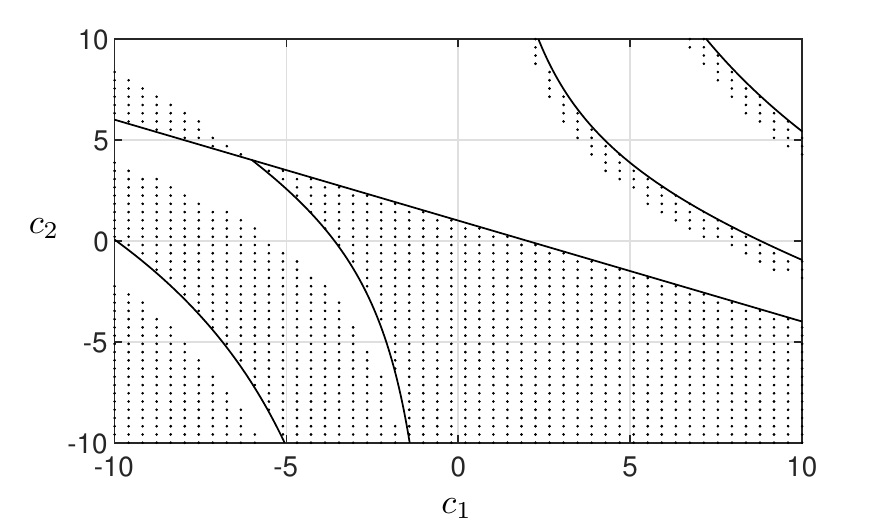}}
\subfigure[]{\includegraphics[scale=0.7]{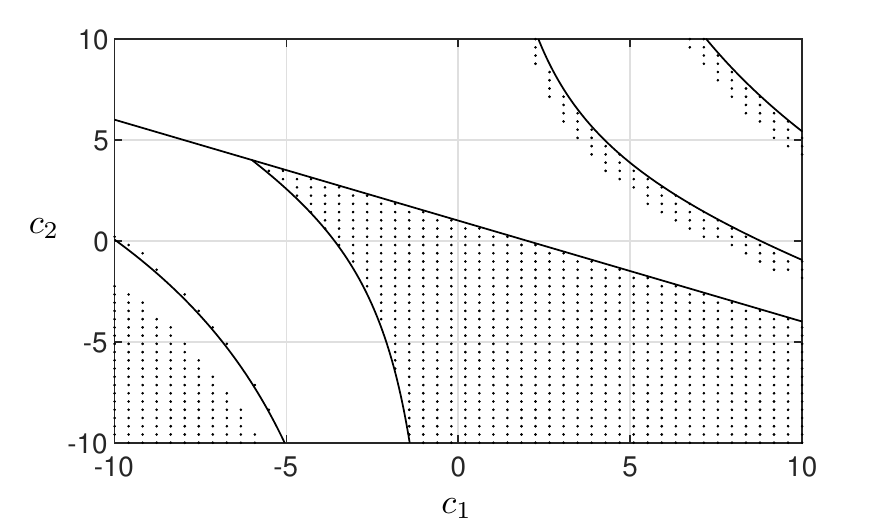}}
\subfigure[]{\includegraphics[scale=0.7]{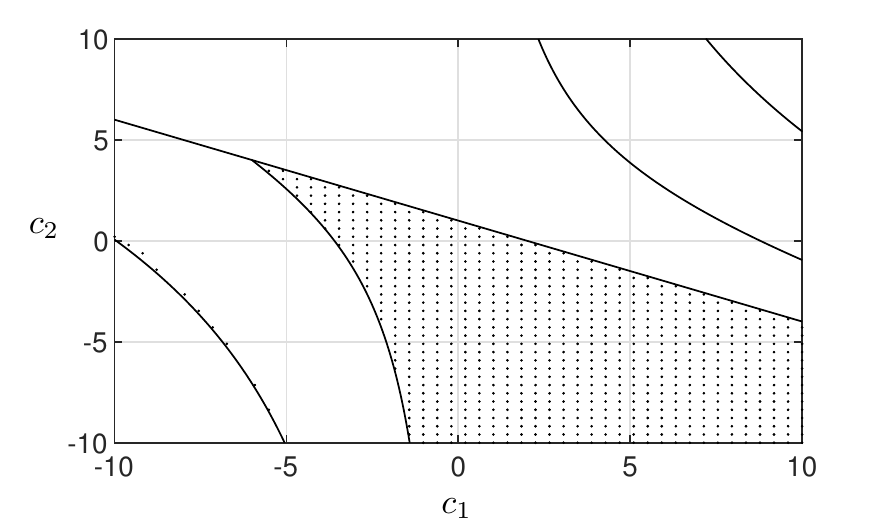}}
\subfigure[]{\includegraphics[scale=0.7]{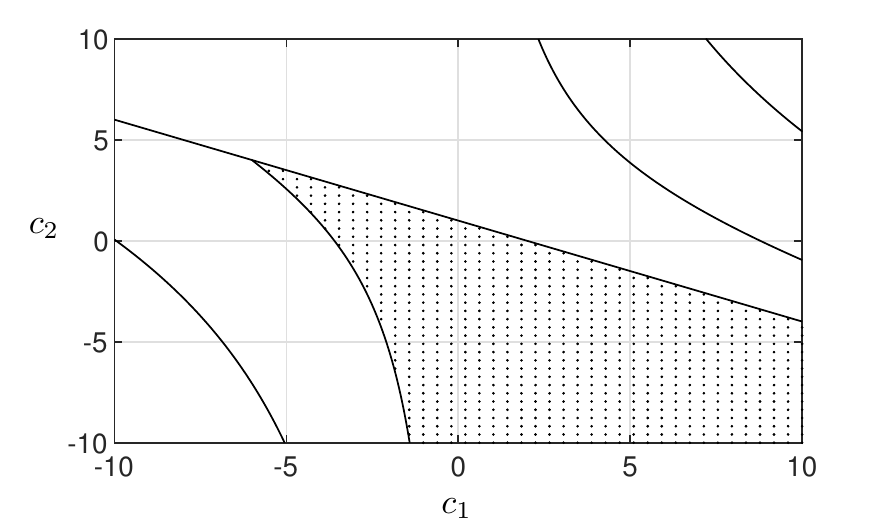}}
\caption{Evolution of the stability region of \Cref{ex:1}: (a) $r=2$, (b) $r=3$, (c) $r=4$, (d) $r=5$}
\label{FigEj1}
\end{figure}
\end{example}

\begin{example} \label{ex:2}
Let the integral equation
\[ x(t)=B\int_{-h}^{0}x(t+\theta)d\theta, \]
where $h=1$ and
\begin{equation}
B=\begin{bmatrix}
0 & 1 & 0 & 0 \\
0 & 0 & 1 & 0 \\
0 & 0 & 0 & 1 \\
-2 & -b_1 & -1 & -b_2
\end{bmatrix}.
\end{equation}
The space of parameter we are interested in is $(b_1,b_2)$. As shown in \Cref{FigEj2}, the exact stability region is reached with $r=2$.
\begin{figure}[!t]
\centering
\includegraphics[scale=0.9]{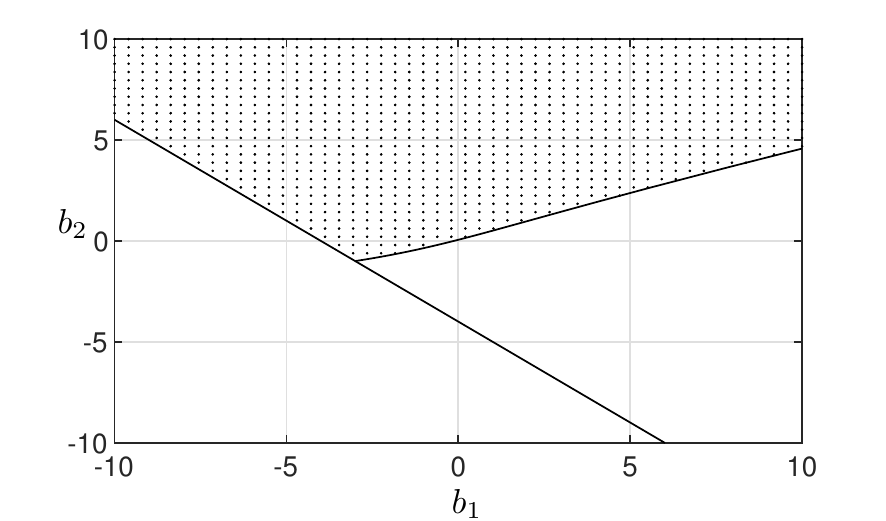}
\caption{Stability region of \Cref{ex:2} found by testing~\eqref{eqCNS1} with $r=2$}
\label{FigEj2}
\end{figure}
\end{example}

\section{Conclusion} \label{secConclusion}

We present stability conditions for integral delay systems that consist in checking the positivity of a $n r \times n r$ matrix which depends exclusively on the delay Lyapunov matrix. We prove that these conditions are necessary and show that for some value of $r$ they become sufficient as well. The stability criterion can be applied to determine stability regions in the space of system or design parameters. The presented examples indicate that the stability region is detected with rather small values of the parameter~$r$.

We believe that the obtained stability charts can be auxiliary in assessing the validity of parameters estimates at early stages of epidemics outbreaks, or in the choice of control parameters of spectrum assignment laws for input delay systems.

Finally, it remains to recognize how much this paper owes to the fundamental work on the Lyapunov matrix of differential systems of retarded type, neutral type, and with distributed delays  exposed in the monograph by Vladimir Kharitonov "Time-delay systems: Lyapunov functionals and matrices".

\section*{ORCID}
\ORCID{Reynaldo Ortiz}{0000-0001-9353-1975} \\
\ORCID{Alexey Egorov}{0000-0001-7671-2467} \\
\ORCID{Sabine Mondié}{0000-0002-0968-1899}

\appendix

\section{Proof of Theorem 1} \label{appQ}

Using equality~\eqref{eqK} and absolute continuity of the fundamental matrix, we can rewrite formula~\eqref{eqQ} as
\begin{equation}
Q(\xi_1,\xi_2)=I_1-I_2,
\end{equation}
where
\[ I_1=\int_{0}^{\infty} K'^T(t-\xi_1)W\left( \dfrac{d}{dt} \int_{-h}^{\xi_2}K(t-\xi_2+\theta_2)F(\theta_2) d\theta_2\right)dt, \]
\[ I_2=\int_{\xi_1}^{0}F^T(\theta_1) \int_{0}^{\infty}K'^T(t-\xi_1+\theta_1)W \left(\dfrac{d}{dt} \int_{-h}^{\xi_2} K(t-\xi_2+\theta_2) F(\theta_2) d\theta_2\right)dt d\theta_1. \]
We turn our attention to the following expression that occurs both in $I_1$ and $I_2$:
\[ R(\Delta,\xi)=\int_{0}^{\infty} K'^T(t-\Delta)W\left(\dfrac{d}{dt} \int_{-h}^{\xi}K(t-\xi+\theta) F(\theta)d\theta\right)dt. \]

\begin{lemma}
Let system~\eqref{eqSys} be exponentially stable. The following equality holds for $\Delta,\,\xi\in[-h,0)$:
\[ R(\Delta,\xi)=-\int_{-h}^{\xi} U''(\Delta-\xi+\theta)F(\theta)d\theta. \]
\end{lemma}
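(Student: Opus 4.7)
The plan is to bring the time derivative inside the $\theta$-integral appearing in $R$, carefully accounting for the jump of $K$ at $0$ (which by \Cref{thmLK0K0I} has size $I$), and then to identify the resulting inner integral in $t$ with $U''$ by means of \Cref{thmUprimeprime}.

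Set $J(t)=\int_{-h}^{\xi}K(t-\xi+\theta)F(\theta)d\theta$, so that $R(\Delta,\xi)=\int_{0}^{\infty}K'^T(t-\Delta)WJ'(t)dt$. I would split the $\theta$-range of $J(t)$ at $\theta=\xi-t$: on $[-h,\xi-t)$ the argument $t-\xi+\theta$ is negative and $K$ equals the constant $K_0$, while on $[\xi-t,\xi]$ it is non-negative. Differentiating by Leibniz's rule, the boundary contributions at $\theta=\xi-t$ combine into $(K(0)-K_0)F(\xi-t)=F(\xi-t)$, and re-extending the range of the remaining integral using that the weak derivative $K'$ vanishes on the negative half-line yields, for all $t>0$,
\begin{equation*}
J'(t)=\F(\xi-t)+\int_{-h}^{\xi}K'(t-\xi+\theta)F(\theta)d\theta,
\end{equation*}
the cutoff built into $\F$ handling automatically both regimes $t\leq\xi+h$ and $t>\xi+h$.

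Accordingly write $R=R_1+R_2$, where $R_1$ collects the $\F$ term and $R_2$ the remaining integral. In $R_1$ the substitution $s=\xi-t$ gives $R_1=\int_{-h}^{\xi}K'^T(\xi-s-\Delta)WF(s)ds$. In $R_2$ I would apply Fubini and then, in the inner integral in $t$, substitute $u=t-\xi+\theta$; since $K'$ vanishes on negatives, the lower limit may be reset to $0$, and the inner integral takes the form $\int_{0}^{\infty}K'^T(u-\tau)WK'(u)du$ with $\tau=\Delta-\xi+\theta<0$. \Cref{thmUprimeprime} then evaluates this to $-K'^T(\xi-\theta-\Delta)W-U''(\Delta-\xi+\theta)$.

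Plugging this back into $R_2$, the $K'^TW$ piece produced by \Cref{thmUprimeprime} yields exactly $-R_1$, and the two cancel, leaving $R(\Delta,\xi)=-\int_{-h}^{\xi}U''(\Delta-\xi+\theta)F(\theta)d\theta$, as claimed. The main subtlety is the bookkeeping around the jump of $K$ at zero: the identity $K(0)-K_0=I$ must supply precisely the boundary term $\F(\xi-t)$ in $J'(t)$ that later cancels the non-integral piece produced by \Cref{thmUprimeprime}; without this exact cancellation the formula would not close. Exponential stability enters only through absolute convergence of the improper integrals over $[0,\infty)$ (justifying Fubini) and through the validity of \Cref{thmUprimeprime}.
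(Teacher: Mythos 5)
Your proposal is correct and follows essentially the same route as the paper's proof: split the kernel integral at $\theta=\xi-t$, use $K(0)-K_0=I$ to produce the boundary term $\F(\xi-t)$, then apply Fubini and \Cref{thmUprimeprime} to identify the result with $-U''$. The only cosmetic difference is that the paper applies \Cref{thmUprimeprime} to the combined sum $K'^TW+\int_0^\infty K'^TWK'\,dt$ in one step, whereas you apply it to the double-integral piece alone and observe the cancellation with $R_1$ afterwards; the algebra is identical.
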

\begin{proof}
With function~\eqref{eqFtilde} we deduce that
\[ \int_{-h}^{\xi}K(t-\xi+\theta)F(\theta)d\theta =\int_{\xi-t}^{\xi} K(t-\xi+\theta)\F(\theta)d\theta +K_0\int_{-h}^{\xi-t}\F(\theta)d\theta \]

As $K$ is absolutely continuous on $[0,\infty)$, by~\Cref{thmLK0K0I}
\[ \dfrac{d}{dt} \int_{-h}^{\xi}K(t-\xi+\theta)F(\theta)d\theta =\F(\xi-t) +\int_{\xi-t}^{\xi}K'(t-\xi+\theta) \F(\theta)d\theta. \]

Thus,
\begin{multline}
R(\Delta,\xi)=\int_0^\infty K'^T(t-\Delta)W \left(\F(\xi-t)+\int_{\xi-t}^{\xi} K'(t-\xi+\theta) \F(\theta)d\theta\right)dt =\int_{-\infty}^{\xi} K'^T(\xi-\theta-\Delta)W\F(\theta)d\theta \\
+\int_{-\infty}^{\xi} \int_{\xi-\theta}^{\infty} K'^T(t-\Delta)WK'(t-\xi+\theta)dt\F(\theta)d\theta =\int_{-h}^{\xi} \left(K'^T(\xi-\theta-\Delta)W +\int_0^\infty K'^T(t+\xi-\theta-\Delta)WK'(t) dt\right) F(\theta)d\theta.
\end{multline}
By~\Cref{thmUprimeprime} we obtain the desired result.
\end{proof}

\Cref{thmQ} is obvious now.

\section{Proof of Lemma 5} \label{appLMprop1}

\textit{Part one}. Consider the function
\begin{equation} \label{eqG}
G(\tau)=-U(\tau)+\int_{-h}^{0}F^T(\theta)U(\tau-\theta)d\theta + WR(\tau),
\end{equation}
where
\[ R(\tau)=S -\int_{0}^{\tau}K(s)ds. \]

Properties~\eqref{eqDyn},~\eqref{eqSym} allow to deduce the following equality\cite{Ortiz2019TAC}:
\begin{equation} \label{eqDnegative}
U(\tau)=\int_{-h}^{0}F^T(\theta)U(\tau-\theta)d\theta+WS-\tau WK_0,\quad \tau<0.
\end{equation}
As for $\tau<0$ function $R(\tau)=S-\tau K_0$, we conclude that $G(\tau)=0$, if $\tau<0$.

Next, we prove that for $\tau\>0$ function $G$ satisfies equation~\eqref{eqKK}. The first term in $G$ obviously does, and by~\eqref{eqDyn}, as $\tau-\theta\>0$, so does the second term. It remains to consider the last term in $G$.

Notice that by~\eqref{eqKK}
\begin{equation}
R(\tau)=S-\int_{0}^{\tau}\int_{-h}^0K(s+\theta)F(\theta)d\theta ds =S -\int_{-h}^0\int_{\theta}^{\tau+\theta} K(s)ds F(\theta)d\theta = S -\int_{-h}^0\int_0^{\tau+\theta}K(s)dsF(\theta)d\theta +K_0 \int_{-h}^0\theta F(\theta)d\theta.
\end{equation}
Finally, by~\eqref{eqS} we obtain that
\begin{equation}
R(\tau) =S +K_0\int_{-h}^0\theta F(\theta)d\theta K_0 K_0^{-1} -\int_{-h}^0\int_0^{\tau+\theta}K(s)dsF(\theta)d\theta
=\int_{-h}^0\left(S-\int_0^{\tau+\theta}K(s)ds\right)F(\theta)d\theta =\int_{-h}^0R(\tau+\theta)F(\theta)d\theta,
\end{equation}
thus, function $G$ satisfies
\begin{equation} \label{eqDynGT}
G(\tau)=\int_{-h}^{0}G(\tau+\theta)F(\theta)d\theta,
\end{equation}
for $\tau\>0$. As the initial function is zero, the unique solution of~\eqref{eqDynGT} is $G(\tau)=0$, $\tau\in\R$.

\textit{Part two}. Transpose formula~\eqref{eqAlgNeg} and substitute the symmetry property~\eqref{eqSym} to arrive at
\begin{equation}
U(-\tau)-P+\tau K_0^TWK_0 =\int_{-h}^{0}\left(U(-\tau+\theta) -P+(\tau-\theta)K_0^TWK_0\right)F(\theta)d\theta +S^TW+\int_{\tau}^0K^T(s)ds W.
\end{equation}
By~\eqref{eqK0},~\eqref{eqP} and \eqref{eqS},
\begin{equation}
P-\tau K_0^TWK_0 +\int_{-h}^{0}\left(-P+(\tau-\theta)K_0^TWK_0\right)F(\theta)d\theta +S^TW+\int_{\tau}^0K^T(s)ds W =\int_{\tau}^0K^T(s)ds W +\tau K_0^TW.
\end{equation}
Substitution of $-\tau$ instead of $\tau$ finally gives the desired equality.

\section{Proof of Lemma 6} \label{appLMprop2}

We consider the following equality:
\[ \int_{0}^{\tau_2}U'(\alpha-\xi)K(\xi)d\xi =\int_{0}^{\tau_2}U'(\alpha-s)K(s)ds. \]
Substitution of~\eqref{eqK} in the left hand side (l.h.s) and~\eqref{eqAlg} in the right hand side (r.h.s) lead to
\begin{equation}
\int_{-h}^{0}\int_{0}^{\tau_2}U'(\alpha-\xi)F(\theta)K(\xi+\theta)d\xi d\theta =\int_{-h}^{0}\int_{0}^{\tau_2} U'(\alpha-s+\theta) F(\theta)K(s) ds d\theta +\int_{0}^{\tau_2}\left(K(s-\alpha)-K_0\right)^TWK(s)ds,
\end{equation}
and the change of variable $\xi=s-\theta$ in the l.h.s. gives
\begin{equation} \label{eqLemU}
\int_{-h}^{0}\int_{\theta}^{0}U'(\alpha-s+\theta)ds F(\theta)d\theta K_0 =\int_{-h}^{0} \int_{\tau_2+\theta}^{\tau_2} U'(\alpha-s+\theta) F(\theta) K(s) ds d\theta +\int_{0}^{\tau_2}\left(K(s-\alpha)-K_0\right)^TWK(s)ds.
\end{equation}
Using the fundamental theorem of calculus, formula~\eqref{eqAlg} and properties of matrix $K$, we can transform the l.h.s. into
\begin{multline}
\int_{-h}^{0}\left(U(\alpha)-U(\alpha+\theta)\right)F(\theta)d\theta K_0 =U(\alpha)\left(\int_{-h}^{0} F(\theta)d\theta -I\right)K_0 +\int_{-\alpha}^{0}\left(K(s)-K_0\right)^T ds W K_0 \\
=U(\alpha)+\int_{\tau_2-h-\alpha}^{0}\left(K(s)-K_0\right)^T ds W K_0 -\int_{\tau_2-h}^{0} \left(K(\xi-\alpha)-K_0\right)^TWK(\xi) d\xi.
\end{multline}
Substituting this expression into the l.h.s. of~\eqref{eqLemU}, we obtain
\begin{equation}
U(\alpha)+\int_{\tau_2-h-\alpha}^{0}\left(K(s)-K_0\right)^T ds W K_0 =\int_{-h}^{0} \int_{\tau_2+\theta}^{\tau_2} U'(\alpha-s+\theta) F(\theta) K(s) ds d\theta +\int_{\tau_2-h}^{\tau_2}\left(K(s-\alpha)-K_0\right)^TWK(s)ds.
\end{equation}
If we take $\alpha=\tau_2-\tau_1$, we can notice that the integral in the l.h.s. equals zero. It remains to change the variable $s=\tau_2+\xi$ in the r.h.s and change the order of integration in the double integral to finish the proof.

\section{Proof of Theorem 2} \label{appDiffV0}

In order to simplify calculations we introduce a technical result.
\begin{lemma} \label{thmQ0}
The function $Q$ is such that for any $\xi_1,\,\xi_2\in[-h,0]$
\[ Q(\xi_1,-h)=Q(-h,\xi_2)=0. \]
\end{lemma}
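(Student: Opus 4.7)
The plan is to verify the two equalities directly from the defining formula~\eqref{eqQ}. The case $Q(\xi_1,-h)=0$ is essentially immediate: both the single and the double integral in~\eqref{eqQ} carry the inner range $\int_{-h}^{\xi_2}=\int_{-h}^{-h}$, which is empty, so each term vanishes by itself.

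The case $Q(-h,\xi_2)=0$ is the substantive one. I would write $Q(-h,\xi_2)=T_1+T_2$, where $T_1=-\int_{-h}^{\xi_2}U''(-h-\xi_2+\theta)F(\theta)d\theta$ is the single-integral term and $T_2=\int_{-h}^{0}\int_{-h}^{\xi_2}F^T(\theta_1)U''(-h-\theta_1-\xi_2+\theta_2)F(\theta_2)d\theta_2\,d\theta_1$ is the double-integral term. Swapping the order of integration in $T_2$ (justified by boundedness of $U''$ on compact segments away from zero, cf.~\Cref{defLyap}), its inner integral takes the form $\int_{-h}^{0}F^T(\theta_1)U''(\tau-\theta_1)d\theta_1$ with $\tau=-h-\xi_2+\theta_2$. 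For $\theta_2\in[-h,\xi_2]$ and $\xi_2\in[-h,0]$ this $\tau$ always lies in $[-2h-\xi_2,-h]\subset(-\infty,-h]$, hence is strictly negative.

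The key identity I plan to use is obtained by taking one further weak derivative of \Cref{thmU'}: $U''(\tau)=\int_{-h}^{0}F^T(\theta_1)U''(\tau-\theta_1)d\theta_1-WK'(\tau)$. For $\tau<0$ the fundamental matrix equals the constant $K_0$, so its weak derivative vanishes and the identity reduces to $U''(\tau)=\int_{-h}^{0}F^T(\theta_1)U''(\tau-\theta_1)d\theta_1$. Applied at $\tau=-h-\xi_2+\theta_2$, this collapses the bracketed inner integral of $T_2$ to $U''(-h-\xi_2+\theta_2)$, yielding $T_2=\int_{-h}^{\xi_2}U''(-h-\xi_2+\theta_2)F(\theta_2)d\theta_2=-T_1$, and cancellation gives $Q(-h,\xi_2)=0$.

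The only non-routine point I anticipate is the distributional step that differentiates \Cref{thmU'} a second time, since $K$ has a jump at the origin that would obstruct a naive pointwise differentiation. However, the arguments of $U''$ that actually appear in $T_1$ and $T_2$ remain at distance at least $h$ from zero, so the identity is used only in a region where $K$ is smooth (indeed, constant), and the technicality never bites; the proof thus reduces to the algebraic identity supplied by \Cref{thmU'} combined with Fubini.
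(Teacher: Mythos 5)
Your proposal is correct and follows essentially the same route as the paper: the first equality is immediate from the empty inner range, and the second follows by swapping the order of integration and applying the identity $U''(\tau)=\int_{-h}^{0}F^T(\theta_1)U''(\tau-\theta_1)\,d\theta_1$, valid for $\tau\<-h$ because $K$ is constant there (the paper obtains this from~\eqref{eqAlgNeg}, you from one more derivative of \Cref{thmU'} --- the same identity). Your remark that the arguments of $U''$ stay at distance at least $h$ from the origin, so the differentiation is unproblematic, is a correct and welcome precision.
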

\begin{proof}
It follows directly from formula~\eqref{eqQ} that $Q(\xi_1,-h)=0$. Now consider 
\[ Q(-h,\xi_2) =-\int_{-h}^{\xi_2}U''(-h-\xi_2+\theta)F(\theta)d\theta +\int_{-h}^{\xi_2}\int_{-h}^{0} F^T(\theta_1)U''(-h-\theta_1-\xi_2+\theta_2) d\theta_1F(\theta_2)d\theta_2. \]
Note that $-h-\xi_2+\theta\<-h$. By formula \eqref{eqAlgNeg}, we obtain the expression
\[ U''(-h-\xi_2+\theta) =\int_{-h}^{0}F^T(\theta_1)U''(-h-\theta_1-\xi_2+\theta)d\theta_1, \]
and the result follows.
\end{proof}

We are ready to differentiate the functional $v_0$. We first apply the changes of variable $s_1=t+\xi_1$ and $s_2=t+\xi_2$ to~\eqref{eqv0Q}:
\[ \dfrac{d}{dt}v_0(x_t)=\dfrac{d}{dt}\int_{t-h}^{t}x^T(s_1)\int_{t-h}^{t} Q(s_1-t,s_2-t)x(s_2)ds_2 ds_1. \]
By the Leibnitz integral rule and \Cref{thmQ0}, the derivative of the functional $v_0$ is
\[ \dfrac{d}{dt}v_0(x_t)=x^T(t)\int_{t-h}^{t}Q(0,s_2-t)x(s_2)ds_2 +\int_{t-h}^{t}x^T(s_1)Q(s_1-t,0)ds_1 x(t) +\int_{t-h}^{t}x^T(s_1)\int_{t-h}^{t} \dfrac{d}{dt}\left[Q(s_1-t,s_2-t)\right]x(s_2)ds_2 ds_1. \]
With equality~\eqref{eqSys} we obtain
\begin{equation} \label{eqDerV0}
\dfrac{d}{dt}v_0(x_t)=\int_{t-h}^{t}\int_{t-h}^{t} x^T(s_1) \left(F^T(s_1-t)Q(0,s_2-t) +Q(s_1-t,0)F(s_2-t) +\dfrac{d}{dt}\left[Q(s_1-t,s_2-t)\right]\right)x(s_2)ds_2 ds_1.
\end{equation}

Consider the third term in brackets:
\begin{equation}
Q(s_1-t,s_2-t)=-\int_{-h}^{s_2-t}U''(s_1-s_2+\theta)F(\theta)d\theta+\int_{s_1-t}^{0}\int_{-h}^{s_2-t}F^T(\theta_1)U''(s_1-s_2-\theta_1+\theta_2)F(\theta_2)d\theta_2d\theta_1.
\end{equation}
By the Leibnitz integral rule,
\[ \dfrac{d}{dt}\left[Q(s_1-t,s_2-t)\right]=  U''(s_1-t)F(s_2-t)+F^T(s_1-t) \int_{-h}^{s_2-t} U''(t-s_2+\theta_2)F(\theta_2)d\theta_2 -\int_{s_1-t}^{0}F^T(\theta_1) U''(s_1-\theta_1-t)d\theta_1F(s_2-t). \]
With the changes of variable $s_1=t+\xi_1$ and $s_2=t+\xi_2$, we obtain the equality
\begin{multline}
F^T(\xi_1)Q(0,\xi_2) +Q(\xi_1,0)F(\xi_2) +\dfrac{d}{dt}\left[Q(s_1-t,s_2-t)\right]\Bigl|_{s_1=t+\xi_1,s_2=t+\xi_2} \\
=\left(U''(\xi_1)-\int_{\xi_1}^{0}F^T(\theta_1) U''(\xi_1-\theta_1)d\theta_1 -\int_{-h}^{0}U''(\xi_1+\theta)F(\theta) d\theta +\int_{\xi_1}^{0}\int_{-h}^{0} F^T(\theta_1)U''(\xi_1-\theta_1+\theta_2)F(\theta_2)d\theta_2 d\theta_1\right) F(\xi_2).
\end{multline}
Applying~\Cref{thmU''} to the first and to the second term, we reduce the expression to
\[ \left(-K'^T(-\xi_1)+\int_{\xi_1}^0 F^T(\theta_1)K'^T(\theta_1-\xi_1)d\theta_1\right) WF(\xi_2)=-F^T(\xi_1)WF(\xi_2). \]
The last equality holds true by~\Cref{thmKprime}. Substitution of the obtained expression into~\eqref{eqDerV0} leads to~\eqref{eqDv0}.

\section{Proof of Lemma 7} \label{appCIZ}

By~\eqref{eqK0}, substitution of $\vph_1$, $\vph_2$ into the bilinear functional~\eqref{eqZ} gives
\begin{equation} \label{eqZT12}
z(\vph_1,\vph_2)=\gamma_1^T \left(T(\tau_1,\tau_2)-T(\tau_1,0)-T(0,\tau_2)+T(0,0)\right)\gamma_2,
\end{equation}
where
\begin{equation}
T(\tau_1,\tau_2)=\int_{-h}^{0}\int_{-h}^{0}K^T(\tau_1+\xi_1)Q(\xi_1,\xi_2) K(\tau_2+\xi_2)d\xi_2 d\xi_1 +\int_{-h}^{0}K^T(\tau_1+\xi)WK(\tau_2+\xi)d\xi.
\end{equation}
With the function
\[ P(s)=\int_{-h}^0 \int_{-h}^{\xi} U'(-s-h-\xi+\theta)F(\theta)d\theta K(\tau_2+\xi)d\xi \]
the first term of $T$ is written as
\begin{equation}
J=\int_{-h}^{0}\int_{-h}^{0}K^T(\tau_1+\xi_1)Q(\xi_1,\xi_2) K(\tau_2+\xi_2)d\xi_2 d\xi_1 =\int_{-h}^{0}K^T(\tau_1-s-h) P'(s) ds -\int_{-h}^{0}\int_{-h-s}^{0}  K^T(\tau_1-s-h) F^T(\theta) P'(s+\theta) d\theta ds.
\end{equation}
Changing the order of integration in the second term of the r.h.s and using the change of variable $s=\xi-\theta$, yields
\[ J=\int_{-h}^{0}K^T(\tau_1-s-h)P'(s)ds -\int_{-h}^{0}\int_{-h}^{\theta}  K^T(\tau_1-\xi+\theta-h) F^T(\theta) P'(\xi) d\xi d\theta. \]
With another change of the order of integration by definition of the fundamental matrix we obtain
\begin{multline}
J=\int_{-h}^{\tau_1-h} \left(K(\tau_1-\xi-h) -\int_{\xi}^{0} F(\theta) K(\tau_1-\xi-h+\theta)d\theta\right)^T P'(\xi) d\xi \\
+\int_{\tau_1-h}^{0} \left(K(\tau_1-\xi-h) -\int_{\xi}^{0} F(\theta) K(\tau_1-\xi-h+\theta)d\theta\right)^T P'(\xi) d\xi =K_0^T\int_{-h}^{\tau_1-h} \int_{-h}^{\xi} F^T(\theta)d\theta P'(\xi) d\xi \\
+K_0^T\int_{\tau_1-h}^{0} \left(I -\int_{-h}^{0} F^T(\theta)d\theta +\int_{-h}^{\xi} F^T(\theta)d\theta \right) P'(\xi) d\xi =K_0^T\int_{-h}^0 \int_{-h}^{\xi} F^T(\theta)d\theta P'(\xi) d\xi -\int_{\tau_1-h}^{0} P'(\xi) d\xi.
\end{multline}
By integration by parts, we get
\begin{equation}
J=K_0^T\int_{-h}^0F^T(\theta)d\theta P(0)-K_0^T\int_{-h}^0 F^T(\xi)P(\xi)d\xi-P(0)+P(\tau_1-h) =K_0^TP(0)-K_0^T\int_{-h}^0 F^T(s)P(s)ds+P(\tau_1-h).
\end{equation}
Consider now the difference
\begin{equation}
P(0)-\int_{-h}^0 F^T(s)P(s)ds =\int_{-h}^0 \int_{-h}^{\xi} \left(U'(-h-\xi+\theta) -\int_{-h}^0 F^T(s)U'(-s-h-\xi+\theta)ds\right)F(\theta)d\theta K(\tau_2+\xi)d\xi.
\end{equation}
As $-h-\xi+\theta\<0$, by~\eqref{eqAlgNegDer}
\begin{multline}
P(0)-\int_{-h}^0 F^T(s)P(s)ds =-\int_{-h}^0\int_{-h}^{\xi}WK_0F(\theta)d\theta K(\tau_2+\xi)d\xi =-WK_0\int_{-h}^0 F(\theta)\int_{\tau_2+\theta}^{\tau_2}K(s)ds d\theta\\
=-WK_0\int_{-h}^0 F(\theta)\int_{\theta}^{0}K(s)ds d\theta -WK_0\int_{-h}^0 F(\theta)d\theta \int_{0}^{\tau_2} K(s)ds +WK_0\int_0^{\tau_2}\int_{-h}^0 F(\theta)K(s+\theta)d\theta ds\\
= WS -WK_0\left(\int_{-h}^0 F(\theta)d\theta-I\right)\int_{0}^{\tau_2}K(s)ds 
=WS-W\int_{-\tau_2}^0 K(\tau_2+\theta)d\theta.
\end{multline}
By~\Cref{thmT1T2},
\[ P(\tau_1-h)=U(\tau_2-\tau_1) -\int_{-h}^0 \left(K(\tau_1+\xi)-K_0\right)^T W K(\tau_2+\xi)d\xi. \]
Substituting of the obtained equalities into $T$ gives
\begin{multline}
T(\tau_1,\tau_2)=K_0^TWS -K_0^TW\int_{-\tau_2}^0 K(\tau_2+\theta)d\theta +U(\tau_2-\tau_1) +K_0^T W \int_{-h}^0 K(\tau_2+\xi)d\xi \\
= U(\tau_2-\tau_1) + K_0^TWS + (h-\tau_2)K_0^TWK_0.
\end{multline}
In view of the above,~\eqref{eqZT12} can be written as
\begin{multline}
z(\vph_1,\vph_2)=\gamma_1^T \left\{ U(\tau_2-\tau_1)+K_0^TWS+(h-\tau_2)K_0^TWK_0 
- U(-\tau_1)-K_0^TWS-hK_0^TWK_0 \right. \\
\left. -U(\tau_2)-K_0^TWS-(h-\tau_2)K_0^TWK_0 +U(0)+K_0^TWS+hK_0^TWK_0\right\} \gamma_2,
\end{multline}
and the result follows.

\section{Proof of Theorem 4} \label{appUnsTh}

Start with a technical result.

\begin{lemma} \label{thmGB}
Let $x_1,\ldots,x_r\in\R$, $f,c_2,\ldots,c_r\>0$, and $c_2+\ldots+c_r=c$. If
\begin{equation} \label{eqGB}
x_k\<c_k+f\sum_{i=1}^{k-1} x_i,\;\;k=2,\ldots,r,
\end{equation}
then
\[ \sum_{k=1}^r x_k\< (c+x_1)(1+f)^{r-1}. \]
\end{lemma}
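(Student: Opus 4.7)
The claim is a standard discrete Gronwall-type inequality, and my plan is to reduce it to a first-order scalar recursion on the partial sums. Introduce $S_k=\sum_{i=1}^{k}x_i$. Adding $S_{k-1}$ to both sides of the hypothesis~\eqref{eqGB} immediately gives the recursion
\[ S_k \< (1+f)S_{k-1}+c_k, \quad k=2,\ldots,r, \]
with the initial datum $S_1=x_1$. This is the only place where the hypothesis is used.

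Next, I would iterate this recursion (either by a trivial induction on $k$, or equivalently by multiplying each inequality by $(1+f)^{r-k}$ and telescoping) to obtain the closed form
\[ S_r \< (1+f)^{r-1}x_1 + \sum_{k=2}^{r}(1+f)^{r-k}c_k. \]
Since $f\>0$ implies $(1+f)^{r-k}\<(1+f)^{r-1}$ for every $k\>2$, and the $c_k$ are nonnegative, the sum on the right is bounded by $(1+f)^{r-1}\sum_{k=2}^{r}c_k=c(1+f)^{r-1}$. Combining the two bounds yields
\[ \sum_{k=1}^{r}x_k = S_r \< (x_1+c)(1+f)^{r-1}, \]
which is the desired inequality.

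There is no real obstacle here: the only subtlety is that the $x_k$ are allowed to be arbitrary reals (not necessarily nonnegative), so I must make sure the recursion $S_k\<(1+f)S_{k-1}+c_k$ is derived only from~\eqref{eqGB} and not from any positivity assumption on the $x_i$; since we add $S_{k-1}$ to both sides of a valid inequality, this is fine. The nonnegativity of $c_k$ and $f$ is used only in the final monotonicity step $(1+f)^{r-k}\<(1+f)^{r-1}$. An alternative presentation by induction on $r$ (with base case $r=1$, where $c=0$ by convention, giving $x_1\<x_1$) is equally short and could be used if a fully self-contained inductive style is preferred.
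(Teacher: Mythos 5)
Your proof is correct and is essentially the paper's argument with different bookkeeping: the paper substitutes $y_k=x_k-c_k$ and proves the term-by-term bound $y_k\leqslant f(c+y_1)(1+f)^{k-2}$ by induction before summing the resulting geometric series, whereas you run the same $(1+f)$-growth induction directly on the partial sums $S_k$ and then bound the inhomogeneous term by $(1+f)^{r-k}\leqslant(1+f)^{r-1}$. Both are the standard discrete Gronwall iteration, and your explicit care that only $f\geqslant 0$ and $c_k\geqslant 0$ are used (so that the possibly negative $x_k$ cause no trouble) is exactly right.
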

\begin{proof}
With the substitution $y_k=x_k-c_k$, $k=1,\ldots,r$, with $c_1=0$, inequalities~\eqref{eqGB} take the form
\[ y_k\<f \sum_{i=1}^{k-1} c_i +f\sum_{i=1}^{k-1} y_i \<fc+f\sum_{i=1}^{k-1} y_i,\;\;k=2,\ldots,r. \]
By mathematical induction $y_k\<f(c+y_1)(1+f)^{k-2}$, $k=2,\ldots,r$.
To finish the proof it remains to make an inverse substitution and summarize estimates for $x_1,\ldots,x_r$.
\end{proof}

The proof of \Cref{thmUnsTh} is based on the following two lemmas.

\begin{lemma} \label{thmApprox}
For every function $\vph\in\PC_h$ and every number $\delta>0$ there exists a function $\psi$ of the form~\eqref{eqSpec} with equidistant points~\eqref{eqEquidTau}, such that
\begin{equation} \label{eqApprox}
\int_{-h}^0 \|\vph(\theta)-\psi(\theta)\|\,d\theta <\delta.
\end{equation}
\end{lemma}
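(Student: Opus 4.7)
\emph{Plan.} The strategy splits into a density result in $L^1([-h,0);\R^n)$ followed by a discretization step that pushes the $\tau_i$'s onto the equidistant grid.

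\emph{Step 1 (density of the non-equidistant span).} Let $\mathcal{F}$ denote the linear span of $\{\theta\mapsto (K(\tau+\theta)-K_0)v:\tau\in(0,h],\,v\in\R^n\}$ inside $L^1([-h,0);\R^n)$. I would prove $\mathcal{F}$ is dense via Hahn--Banach. Suppose $\mu\in L^\infty([-h,0);\R^n)$ annihilates $\mathcal{F}$, that is,
\[ \int_{-h}^{0}\mu^T(\theta)(K(\tau+\theta)-K_0)\,d\theta=0,\quad\tau\in(0,h]. \]
Because $K(\tau+\theta)=K_0$ for $\theta<-\tau$, the integrand vanishes on $[-h,-\tau)$. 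Substituting $s=-\theta$ and writing $\hat\mu(s):=\mu(-s)$ reduces the condition to $\int_{0}^{\tau}\hat\mu^T(s)(K(\tau-s)-K_0)\,ds=0$. Using the jump identity $K(0)-K_0=I$ from \Cref{thmLK0K0I} together with \Cref{thmKprime}, one has $K(t)-K_0=I+\int_{0}^{t}K'(u)\,du$ for $t>0$; differentiating the above equality in $\tau$ (legitimate since $\hat\mu\in L^\infty$ and $K'$ is bounded on $[0,h]$) yields the Volterra equation of the second kind
\[ \hat\mu^T(\tau)+\int_{0}^{\tau}\hat\mu^T(s)K'(\tau-s)\,ds=0,\quad \text{a.e. }\tau\in(0,h]. \]
Gronwall's inequality then forces $\hat\mu\equiv 0$, so $\mu=0$ a.e., and $\mathcal{F}$ is dense.

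\emph{Step 2 (equidistant discretization).} Density furnishes $m\geq 1$, points $\tau_1,\ldots,\tau_m\in(0,h]$ and vectors $v_1,\ldots,v_m\in\R^n$ with $\psi_0(\theta):=\sum_{k=1}^{m}(K(\tau_k+\theta)-K_0)v_k$ satisfying $\int_{-h}^{0}\|\varphi-\psi_0\|\,d\theta<\delta/2$. For $r$ large, let $j_k\in\{1,\ldots,r\}$ be the nearest integer to $r\tau_k/h$, so that $|j_k h/r-\tau_k|\leq h/(2r)$; set $\gamma_i:=\sum_{k:\,j_k=i}v_k$ (and $\gamma_i=0$ when no $k$ hits $i$), and let $\psi$ be the function of the form~\eqref{eqSpec} with equidistant points~\eqref{eqEquidTau} and coefficients $\gamma_i$. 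Then $\psi-\psi_0$ is a sum of at most $m$ terms of the form $(K(j_kh/r+\theta)-K(\tau_k+\theta))v_k$, and by standard $L^1$-continuity of translations applied to $K$ on the bounded range $[-h,h]$ (where $K$ is bounded, having a single jump at $0$), each such summand has $L^1$-norm tending to zero as $r\to\infty$. Choosing $r$ sufficiently large makes $\int_{-h}^{0}\|\psi-\psi_0\|\,d\theta<\delta/2$, and the triangle inequality closes the argument.

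\emph{Main obstacle.} The crux is Step~1: recognising the annihilator condition as a Volterra equation of the second kind. The essential algebraic input is the jump identity $K(0)-K_0=I$, which furnishes the identity coefficient in front of $\hat\mu^T(\tau)$ and enables Gronwall; without it the uniqueness argument would not close. The discretization in Step~2 is then a routine continuity-of-translation estimate.
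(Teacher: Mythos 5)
Your argument is correct, but it is genuinely different from the paper's. The paper proceeds constructively: it defines the coefficients $\gamma_i$ by interpolating $\vph$ at the grid points, $\gamma_r=\vph(-h)$ and $\gamma_k=\vph(-\tau_k)-\sum_{i=k+1}^{r}(K(\tau_i-\tau_k)-K_0)\gamma_i$ (a triangular system whose solvability rests on the same jump identity $K(0)-K_0=I$ that you use), and then bounds the $L^1$ error by $\frac{h}{r}\bigvee_{-h}^{0}\vph$ plus a term controlled through the Lipschitz constant of $K$ and a discrete Gronwall-type estimate (\Cref{thmGB}) for $\sum_i\|\gamma_i\|$. You instead prove density of the non-equidistant span in $L^1$ by duality, reducing the annihilator condition to a homogeneous Volterra equation of the second kind killed by the continuous Gronwall inequality, and then snap the nodes onto the grid~\eqref{eqEquidTau} by $L^1$-continuity of translations. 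It is a pleasant coincidence that both proofs pivot on the jump $K(0)-K_0=I$ and on a Gronwall argument (discrete in the paper, continuous in yours). The trade-offs: the paper's route is quantitative --- it exhibits the approximant explicitly and gives an $O(1/r)$ rate --- but its error bound involves the total variation of $\vph$ and is therefore only conclusive for $\vph$ of bounded variation; your route is non-constructive and gives no rate, but it needs only $\vph\in L^1$, so it covers every piecewise continuous bounded initial function without further hypotheses, and it dispenses with \Cref{thmGB} entirely. The two technical points you flag (legitimacy of differentiating the annihilator identity, and the translation estimate for $K$, which is Lipschitz on $[0,h]$ and constant on $[-h,0)$ with a single jump of size $I$ at zero) both check out against \Cref{thmLK0K0I} and \Cref{thmKprime}.
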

\begin{proof}
Given a function $\vph$ and a number $\delta>0$, we seek a natural number $r$ and vectors $\gamma_i$, $i=1,\ldots,r$, such that inequality~\eqref{eqApprox} holds true for
\[ \psi(\theta)=\sum_{i=1}^{r} \left(K(\tau_i+\theta)-K_0\right) \gamma_i, \]
where $\tau_i=ih/r$, $i=1,\ldots,r$. We uniquely define vectors $\gamma_1,\ldots,\gamma_r$ from the equalities
\[ \vph(-\tau_k)=\psi(-\tau_k), \quad k=1,\ldots,r, \]
which can be rewritten, based on the definition of the fundamental matrix, in an explicit form:
\begin{gather}
\gamma_r=\vph(-h), \\
\gamma_k=\vph(-\tau_k)-\sum_{i=k+1}^{r} \left(K(\tau_i-\tau_k)-K_0\right)\gamma_i,\quad k=1,\ldots,r-1.
\end{gather}
It remains to deduce an estimate of the distance between $\vph$ and $\psi$, which tends to zero as $r$ tends to infinity.

As the fundamental matrix is absolutely continuous on $[0,h]$, it is also Lipschitz, i.\:e., there exists a constant $L$ such that
\[ \|K(t_1)-K(t_2)\|\< L|t_1-t_2|, \quad t_1,t_2\in[0,h]. \]

Consider first $\theta\in[-\tau_{r},-\tau_{r-1})$:
\[ \|\vph(\theta)-\psi(\theta)\| =\|\vph(\theta)-\vph(-\tau_r) +\psi(-\tau_r) -\psi(\theta)\| \< \|\vph(\theta) -\vph(-\tau_r)\| +\left\|K(\tau_r+\theta)-K(0)\right\| \|\gamma_r\| \<\bigvee_{-\tau_r}^{-\tau_{r-1}}\vph +(\tau_r+\theta)L\|\gamma_r\|, \]
where the first term on the right hand side denotes the variation of function $\vph$ on the segment $[-\tau_r,-\tau_{r-1}]$.

Similarly, one can show that for $\theta\in[-\tau_{k},-\tau_{k-1})$, $k\in\{1,\ldots,r\}$ ($\tau_0=0$),
\[ \psi(\theta)=\sum_{i=k}^{r} \left(K(\tau_i+\theta)-K_0\right) \gamma_i, \]
and therefore,
\[ \left\|\vph(\theta)-\psi(\theta)\right\| \<\|\vph(\theta)-\vph(-\tau_k)\|+\sum_{i=k}^r\left\|K(\tau_i+\theta)-K(\tau_i-\tau_k)\right\|\|\gamma_i\| \<\bigvee_{-\tau_k}^{-\tau_{k-1}}\vph +(\tau_k+\theta)L\sum_{i=k}^r\|\gamma_i\|. \]
In this research, we are not interested in the high accuracy of the upper boundaries, so we can roughly estimate this expression from above as follows:
\begin{equation} \label{eqIneqApprox}
\left\|\vph(\theta)-\psi(\theta)\right\|\<\bigvee_{-\tau_k}^{-\tau_{k-1}}\vph +\dfrac{h}{r}L\sum_{i=1}^r\|\gamma_i\|.
\end{equation}

By definition of $\psi$, one can deduce for $k\in\{1,\ldots,r-1\}$ that $\gamma_k=\psi(-\tau_k)-\psi(-\tau_k-0)$, where $\psi(-\tau_k-0)$ means the left limit of $\psi$ at point $-\tau_k$. This implies that
\[ \|\gamma_k\|\< \|\vph(-\tau_k)-\vph(-\tau_{k+1})\| +\|\psi(-\tau_{k+1})-\psi(-\tau_k-0)\| \<\bigvee_{-\tau_{k+1}}^{-\tau_{k}}\vph +\dfrac{h}{r}L\sum_{i=k+1}^r\|\gamma_i\|. \]

Now we can apply \Cref{thmGB} with
\[ f=\dfrac{h}{r}L, \quad x_k=\|\gamma_{r-k+1}\|, \;\;k=1,\ldots,r, \quad c_k=\bigvee_{-\tau_{r-k+2}}^{-\tau_{r-k+1}}\vph, \;\;k=2,\ldots,r,  \] 
to obtain that
\[ \sum_{i=1}^r\|\gamma_i\|\< (c+\|\gamma_r\|)\left(1+\dfrac{h}{r}L\right)^{r-1} \<\left(\bigvee_{-h}^{0}\vph+\|\vph(-h)\|\right)\left(1+\dfrac{h}{r}L\right)^{r-1} \<\left(\bigvee_{-h}^{0}\vph+\|\vph(-h)\|\right)\left(1+\dfrac{h}{r}L\right)^{-1}e^{Lh}. \]

Combining the obtained inequality with~\eqref{eqIneqApprox}, we obtain the desired result:
\[ \int_{-h}^{0} \|\vph(\theta) -\psi(\theta)\|\,d\theta \<\sum_{k=1}^r \int_{-\tau_k}^{-\tau_{k-1}} \|\vph(\theta) -\psi(\theta)\|\,d\theta \<\dfrac{h}{r}\bigvee_{-h}^{0}\vph +he^{Lh}\left(\dfrac{r}{Lh}+1\right)^{-1} \xrightarrow[r\rightarrow \infty]{} 0.\tag*{\qedhere} \]
\end{proof}

\begin{lemma}
If system~\eqref{eqSys} is unstable and satisfies \Cref{assEig}, then for every $c>0$ there exists a function $\vph\in\PC_h$, such that
\[ v_1(\vph)\<-c. \]
\end{lemma}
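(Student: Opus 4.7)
The plan is to exploit the fact that instability, together with \Cref{assEig}, forces the existence of an eigenvalue $s_0$ with strictly positive real part, and then combine this with the monotonicity property that $v_1$ is non-increasing along trajectories of~\eqref{eqSys} (by part i) of \Cref{thmV1prop}, since $W>0$) to drive $v_1$ to $-\infty$ along one specific trajectory. Because the segment $x_t(\vph_0)$ belongs to $\PC_h$ for every $t\>0$, an arbitrarily negative value of $v_1$ somewhere along the trajectory immediately yields an arbitrarily negative value of $v_1$ at some element of $\PC_h$.

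First I would pick an eigenvalue $s_0=\sigma_0+i\omega_0$ of~\eqref{eqSys} with $\sigma_0>0$ (such an eigenvalue exists because the system fails to be exponentially stable while \Cref{assEig} excludes purely imaginary eigenvalues) together with a nonzero vector $v_0=a+ib\in\Cx^n$ in the kernel of $H(s_0)$. Setting
\[ \vph_0(\theta)=\mathrm{Re}(e^{s_0\theta}v_0) =e^{\sigma_0\theta}\bigl(a\cos(\omega_0\theta) -b\sin(\omega_0\theta)\bigr), \quad \theta\in[-h,0), \]
and using that $F$ is real-valued and $H(s_0)v_0=0$, a direct substitution in~\eqref{eqSys} shows that the corresponding solution is $x(t,\vph_0)=\mathrm{Re}(e^{s_0 t}v_0)$. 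If this real part happens to vanish identically, the imaginary part is used instead.

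Next, integrating the derivative identity from part i) of \Cref{thmV1prop} from $0$ to $t$ yields
\[ v_1(x_t(\vph_0))=v_1(\vph_0) -\int_0^t x^T(s-h,\vph_0)\,W\,x(s-h,\vph_0)\,ds. \]
Using $x^TWx\>\lambda_{\min}(W)\|x\|^2$ and squaring the explicit formula for $x$, a short trigonometric computation gives
\[ \|x(s,\vph_0)\|^2 = e^{2\sigma_0 s}\left[\frac{\|a\|^2+\|b\|^2}{2} +\frac{\|a\|^2-\|b\|^2}{2}\cos(2\omega_0 s) -a^Tb\sin(2\omega_0 s)\right], \]
whose mean part is of order $e^{2\sigma_0 s}$ with strictly positive coefficient whenever $v_0\neq 0$. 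Hence $\int_0^t\|x(s-h,\vph_0)\|^2\,ds\to+\infty$, and therefore $v_1(x_t(\vph_0))\to-\infty$ as $t\to\infty$.

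Finally, given $c>0$ I would pick $T>0$ large enough that $v_1(x_T(\vph_0))\<-c$ and set $\vph=x_T(\vph_0)$; piecewise continuity of the solution on $[-h,\infty)$ guarantees $\vph\in\PC_h$, and the inequality $v_1(\vph)\<-c$ is precisely the desired conclusion. The main delicate step is showing that the oscillating terms cannot cancel the exponential growth when $s_0$ is complex: by the Cauchy--Schwarz inequality $(a^Tb)^2\<\|a\|^2\|b\|^2$ the amplitude of the oscillating contribution is bounded above by the constant coefficient $(\|a\|^2+\|b\|^2)/2$, and after integration against $e^{2\sigma_0 s}$ the non-oscillating piece picks up the factor $1/(2\sigma_0)$ while the oscillating ones pick up only $1/(2\sqrt{\sigma_0^2+\omega_0^2})$, so the mean term dominates and forces divergence.
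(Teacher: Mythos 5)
Your argument for driving $v_1$ to $-\infty$ along an exponentially growing eigensolution is sound, and it is a legitimate alternative to the paper's mechanism: the paper instead exploits the self-similarity $\tilde x(\tau+\theta)=e^{\alpha\tau}\tilde x(\theta)$ over one period together with the quadratic homogeneity of $v_1$ to get the clean identity $(e^{2\alpha\tau}-1)v_1(\tilde x_0)=-\int_{-h}^{\tau-h}\tilde x^TW\tilde x\,dt<0$, whereas you integrate the derivative formula over an unbounded horizon and check, via the Cauchy--Schwarz bound on the oscillating amplitude, that the integral diverges. Both are correct, and yours even dispenses with the homogeneity step when producing a value below $-c$.

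However, there is a genuine gap at the very first step: you assert in passing that instability plus \Cref{assEig} yields an eigenvalue $s_0$ with $\mathrm{Re}\,s_0>0$, ``because the system fails to be exponentially stable while \Cref{assEig} excludes purely imaginary eigenvalues.'' For retarded differential-delay systems this implication is classical, but for integral equations of the form~\eqref{eqSys} it is precisely the nontrivial point: the spectral theory connecting the zeros of $\det H(s)$ in~\eqref{eqCharM} to exponential stability is not established a priori for this class (indeed, for the closely related continuous-time difference equations the supremum of the real parts of the spectrum need not be attained, so ``unstable and no imaginary roots'' does not formally force a root in the open right half-plane without further argument). The paper devotes the bulk of its proof to exactly this step: it applies the operator $\dfrac{d}{dt}+\mathrm{id}$ to~\eqref{eqSys} to obtain the differential system~\eqref{eqDiffSys}, verifies that solutions of~\eqref{eqSys} become absolutely continuous after time $h$ so that instability transfers to~\eqref{eqDiffSys}, and uses the factorization $\widetilde H(s)=(s+1)H(s)$ to import the well-established spectral stability theory of differential-delay systems back to~\eqref{eqSys}. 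Your proof needs this reduction (or an equivalent Paley--Wiener-type argument showing that the closed right half-plane contains only finitely many eigenvalues and that their absence implies exponential stability) before the eigensolution $\mathrm{Re}(e^{s_0t}v_0)$ can be produced; as written, the claim is unjustified.
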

\begin{proof}
As functional $v_1$ is homogeneous, it remains to show that there exists a function $\vph\in\PC_h$, such that $v_1(\vph)<0$.

As system~\eqref{eqSys} is unstable and does not have pure imaginary eigenvalues, it has an eigenvalue with a positive real part. To prove this, introduce system with a term described by the Riemann-Stieltjes integral
\begin{equation} \label{eqDiffSys}
\dfrac{d}{dt}x(t)=F(0)x(t)-F(-h)x(t-h) -\int_{-h}^0 dF(\theta)x(t+\theta) -\left(x(t)-\int_{-h}^0 F(\theta) x(t+\theta)d\theta\right),
\end{equation}
which can be obtained from~\eqref{eqSys} by applying the operator $\dfrac{d}{dt}+\mathrm{id}$. It is clear that any absolutely continuous solution of~\eqref{eqSys} is also a solution of system~\eqref{eqDiffSys}. But there is one problem. There exist solutions of~\eqref{eqSys} that are not absolutely continuous on $[0,\infty)$.

However, one can show that all solutions of system~\eqref{eqSys} are absolutely continuous on $[h,\infty)$, i.\:e., integral systems have the property of smoothing the solutions. It can can be shown in the same way, like for the fundamental matrix in \Cref{thmKprime}, taking into account that on $[0,h]$ any solution is continuous. Thus, we can claim that system~\eqref{eqDiffSys} is unstable, as~\eqref{eqSys} is.

The connection between eigenvalues of system~\eqref{eqDiffSys} and stability is well established, in contrast to integral systems. Compute the characteristic matrix of system~\eqref{eqDiffSys}:
\[ \widetilde H(s)=sI -F(0)+I+F(-h)e^{-sh}-\int_{-h}^0 e^{s\theta}F(\theta)d\theta +\int_{-h}^0e^{s\theta} dF(\theta). \]
Integration by parts in the last term leads to the equality
\[ \widetilde H(s)=sI+I-\int_{-h}^0 e^{s\theta}F(\theta)d\theta -s\int_{-h}^0e^{s\theta} F(\theta)d\theta =(s+1)H(s), \]
where $H$ is the characteristic matrix for system~\eqref{eqSys}. Thus, system~\eqref{eqDiffSys} has the same eigenvalues, like system~\eqref{eqSys}, but additionally has multiple eigenvalue $-1$.

System~\eqref{eqDiffSys} is unstable and, as we see now, does not have pure imaginary eigenvalues by \Cref{assEig}. Thus, it has an eigenvalue $s_0=\alpha+j\beta$ with positive real part ($\alpha>0$), and $s_0$ is also an eigenvalue of~\eqref{eqSys}. Let $C=C_1+jC_2$ be a corresponding eigenvector. Then
\[ \tilde x(t)=e^{\alpha t}\left(C_1\cos(\beta t) -C_2\sin(\beta t)\right),\;\;t\>-h, \]
is a nontrivial solution of~\eqref{eqSys}.

If $\beta\neq 0$ take $\tau=2\pi/|\beta|$, while if $\beta=0$ take $\tau=1$. It follows that
\[ \tilde x(\tau+\theta)=e^{\alpha\tau}\tilde x(\theta), \quad \theta\in[-h,0). \]
As $v_1$ is a quadratic functional,
\[ v_1(\tilde x_\tau) =e^{2\alpha\tau}v_1(\tilde x_0). \]
Integrating~\eqref{eqDv1} along the solution $\tilde x$ from $0$ to $\tau$, we obtain
\begin{equation} \label{eqPfUnsTh1}
-\int_{-h}^{\tau-h}\tilde x^T(t)W\tilde x(t)dt =v_1(\tilde x_\tau) -v_1(\tilde x_0) =\left(e^{2\alpha\tau}-1\right) v_1(\tilde x_0).
\end{equation}
It is obvious now that $v_1(\tilde x_0)<0$.
\end{proof}

To finish the proof of \Cref{thmUnsTh}, one just need to combine the two lemmas presented above with the continuity of functional $v_1$ (see, \Cref{thmV1prop}).

\bibliography{wileyNJD-AMA}




\end{document}